\documentclass[runningheads]{llncs}
\usepackage{graphicx} 
\usepackage[T1]{fontenc}
\usepackage{stmaryrd} 
\usepackage{xspace}
\usepackage{cite}
\usepackage{url}
\usepackage{listings}
\usepackage{subcaption}
\usepackage{graphicx}
\usepackage{float}
\usepackage{wrapfig}
\usepackage{needspace}
\usepackage[table, dvipsnames]{xcolor}
\usepackage{color}
\usepackage{courier}
\usepackage{multirow}
\usepackage{amssymb}
\usepackage{hhline}
\usepackage{hyperref}
\usepackage{titletoc}
\usepackage{mathpartir}

\providecommand{\doi}[1]{\url{https://doi.org/#1}}

\newcommand{\sys}{\textbf{\textbf{crush}}\xspace}
\newcommand{\leanauto}{\leaninline{auto}\xspace}
\newcommand{\grind}{\leaninline{grind}\xspace}
\newcommand{\leansmt}{lean-smt\xspace}
\newcommand{\EG}{e.g.,\xspace}
\newcommand{\IE}{i.e.,\xspace}

\definecolor{LeanKeyword}{RGB}{0,82,155}
\definecolor{LeanDeclaration}{RGB}{111,45,168}
\definecolor{LeanType}{RGB}{0,112,105}
\definecolor{LeanTactic}{RGB}{166,86,0}
\definecolor{LeanComment}{RGB}{63,122,52}
\definecolor{LeanString}{RGB}{177,43,74}

\NewDocumentCommand{\mike} { mO{} }{\textcolor{blue}{\textsuperscript{\textit{Mike}}\textsf{\textbf{\small[#1]}}}}
\NewDocumentCommand{\dexin} { mO{} }{\textcolor{gray}{\textsuperscript{\textit{Dexin}}\textsf{\textbf{\small[#1]}}}}
\NewDocumentCommand{\aarti} { mO{} }{\textcolor{purple}{\textsuperscript{\textit{Aarti}}\textsf{\textbf{\small[#1]}}}}

\lstdefinelanguage{Lean}{
  sensitive=true,
  alsoletter={_},
  morekeywords=[1]{
    abbrev,axiom,class,def,deriving,end,example,export,import,inductive,
    crush_lower,crush_lower_result,crush_translate,crush_translate_sort,
    crush_unfold,crush_reconstruct,
    instance,namespace,opaque,open,register_crush_replay,register_lowering,section,set_option,
    structure,theorem,universe,variable,variables,where
  },
  keywordstyle=[1]{\color{LeanDeclaration}\bfseries},
  morekeywords=[2]{
    as,by,do,else,for,from,fun,have,if,in,let,match,mutual,nomatch,
    return,show,then,try,unless,while,with
  },
  keywordstyle=[2]{\color{LeanKeyword}\bfseries},
  morekeywords=[3]{
    Array,Bool,Char,False,Fin,Float,Int,IO,List,LoweringHandler,Nat,None,
    Option,Prod,Prop,Result,some,Sort,SortHandler,String,Sum,TranslationHandler,
    True,Type,Unit,false,true
  },
  keywordstyle=[3]{\color{LeanType}\bfseries},
  morekeywords=[4]{
    aesop,apply,assumption,auto,cases,constructor,contradiction,crush,
    decide,exact,grind,induction,intro,intros,norm_num,omega,rfl,rw,
    simp,simp_all,simpa,solve_by_elim,specialize,subst
  },
  keywordstyle=[4]{\color{LeanTactic}\bfseries},
  morecomment=[l]{--},
  morecomment=[s]{/-}{-/},
  commentstyle={\color{LeanComment}},
  morestring=[b]",
  stringstyle={\color{LeanString}}
}

\lstdefinelanguage{Alethe}{
  sensitive=true,
  alsoletter={_,-,:},
  morekeywords=[1]{assume,step,anchor,cl},
  keywordstyle=[1]{\color{LeanDeclaration}\bfseries},
  morekeywords=[2]{:rule,:premises,:args,:discharge,:named},
  keywordstyle=[2]{\color{LeanKeyword}\bfseries},
  morekeywords=[3]{
    Bool,Int,Real,BitVec,Array,true,false,not,and,or,xor,ite,let,forall,exists,
    bvnot,bvand,bvor,bvxor,bvnand,bvnor,bvxnor,bvneg,bvadd,bvsub,bvmul,
    bvult,bvule,bvslt,bvsle,concat,extract,select,store
  },
  keywordstyle=[3]{\color{LeanType}\bfseries},
  morekeywords=[4]{
    refl,cong,trans,symm,resolution,rare_rewrite,reordering,contraction,
    weakening,forall_inst,bind,subproof,evaluate,aci_simp
  },
  keywordstyle=[4]{\color{LeanTactic}\bfseries},
  morecomment=[l]{;},
  commentstyle={\color{LeanComment}},
  morestring=[d]",
  stringstyle={\color{LeanString}}
}

\lstdefinestyle{lean}{
  language=Lean,
  basicstyle=\ttfamily\scriptsize,
  columns=fullflexible,
  breakatwhitespace=false,
  breaklines=true,
  numbers=none,
  captionpos=b,
  keepspaces=true,
  xleftmargin=0.2em,
  showspaces=false,
  showstringspaces=false,
  showtabs=false,
  tabsize=2,
  numbersep=5pt,
  literate=
    {→}{{$\to$}}1
    {←}{{$\leftarrow$}}1
    {↔}{{$\leftrightarrow$}}1
    {⇒}{{$\Rightarrow$}}1
    {∀}{{$\forall$}}1
    {∃}{{$\exists$}}1
    {λ}{{$\lambda$}}1
    {≠}{{$\ne$}}1
    {≤}{{$\le$}}1
    {≥}{{$\ge$}}1
    {∧}{{$\land$}}1
    {∨}{{$\lor$}}1
    {¬}{{$\neg$}}1
    {⊢}{{$\vdash$}}1
}

\lstdefinestyle{inlinecode}{
  basicstyle=\ttfamily,
  columns=fullflexible,
  breaklines=false,
  breakatwhitespace=false,
  breakautoindent=false,
  breakindent=0pt,
  postbreak={},
  mathescape=false,
  escapeinside={},
  frame=none,
  numbers=none
}
\DeclareRobustCommand{\leaninline}[1]{%
  \mbox{\lstinline[style=lean,style=inlinecode]{#1}}}
\DeclareRobustCommand{\aletheinline}[1]{%
  \mbox{\lstinline[style=lean,style=inlinecode,language=Alethe]{#1}}}

\newcommand{\myparagraph}[1]{\vspace*{3pt}\noindent\textbf{\emph{#1}}\xspace}

\spnewtheorem*{proofsketch}{Proof sketch}{\itshape}{\rmfamily}

\newcommand{\sembrackets}[1]{\llbracket #1\rrbracket}
\newcommand{\denote}[3]{\sembrackets{#1}_{#2}(#3)}
\newcommand{\typedenote}[2]{\sembrackets{#1}_{#2}}
\newcommand{\booltype}{\mathsf{Bool}}
\newcommand{\proptype}{\mathsf{Prop}}
\newcommand{\defunc}{\mathcal F}
\newcommand{\theoryencoding}{\mathcal D}
\newcommand{\headapp}{\mathcal A}
\newcommand{\bodytrans}{\mathcal B}
\newcommand{\fnsort}[1]{%
  \mathsf{Fn}\if\relax\detokenize{#1}\relax\else_{#1}\fi}
\newcommand{\appsym}[1]{\mathsf{app}_{#1}}
\newcommand{\clossym}[1]{\mathsf{clos}_{#1}}
\newcommand{\extaxiom}[1]{\mathsf{ext}_{#1}}
\newcommand{\figurepart}[1]{\par\noindent{\raggedright\bfseries #1\par}}

\newcommand{\DoToC}{%
  \startcontents[appendices]
  \begingroup
  \def\authcount##1{}%
  \printcontents[appendices]{}{1}{%
    \setcounter{tocdepth}{2}%
    \textbf{Table of Contents}\par
    \vskip3pt\hrule\vskip1pt\hrule\vskip5pt
  }%
  \par\vskip3pt\hrule\vskip1pt\hrule\vskip5pt
  \endgroup
}

\begin{document}
\title{\sys: Extensible Proof Automation for Lean}
\author{Mike He\and Dexin Zhang\and Aarti Gupta}
\institute{Princeton University}

\maketitle

\begin{abstract}
Lean's rich expressiveness supports higher-order functions, inductive datatypes,
and user-defined abstractions, making it challenging to translate proof
obligations into first-order Satisfiability Modulo Theory (SMT) queries and reconstruct kernel-checked proofs.
We present \sys, an extensible SMT-backed proof automation framework for Lean.
To bridge the gap between Lean's higher-order constructs and the solver's
first-order logic, \sys uses defunctionalization to encode function values
and their applications. We formalize this translation for a supported
higher-order fragment and prove its soundness.
Moreover, \sys provides metaprogramming interfaces for extending both
Lean-to-SMT translation and the reconstruction of Lean proofs from solver certificates without modifying the translation pipeline.
We implement \sys as both a tactic and a backend for other third-party
automation systems.
We show that across four such use cases that target
verification and general proof
automation, \sys discharges more obligations than the competing backends and reconstructs more kernel-checked proofs than the baselines.

\keywords{Interactive Theorem Proving\and SMT Solvers\and Lean Hammer}
\end{abstract}

\section{Introduction}
Theorem proving has been widely applied in verification of hardware designs~\cite{choi2017kami, russinoff1998mechanically, kaivola2009intelfv}, software codebases~\cite{appel2011vst, leroy2016compcert, seL4} and cloud services~\cite{ironfleet, cacm-aws-formal-semi-formal, grove2023sosp}. 
The success of theorem proving in these domains can be attributed to rigorous improvements along two lines: interactive theorem proving and automated theorem provers (ATP).
Interactive theorem proving involves manual formalization and proofs, checked by a proof assistant (PA) such as Rocq~\cite{coq-prover}, Lean~\cite{lean-prover}, Isabelle/HOL~\cite{isabelle-hol} and Agda~\cite{agda-prover}.
These tools provide rich type systems in which users can formalize properties with abstract and higher-order constructs.
On the other hand, automated theorem provers such as Z3~\cite{z3}, CVC5~\cite{cvc5}, and E-Prover~\cite{e-prover} support less expressive logics but offer significant automation in proof search. 

\newcommand{\supported}{\ensuremath{\checkmark}}
\newcommand{\partiallysupported}{\ensuremath{\triangle}}
\newcommand{\unsupported}{\ensuremath{\times}}

\begin{table}[t!]
\centering
\caption{Qualitative comparison of goal translation and proof reconstruction.
\emph{Poly.} shows support of polymorphic premises; \emph{HO} shows support of higher-order functions;
\emph{Ext. encoding} denotes interfaces for custom encodings.
\emph{Core-guided} and \emph{Replay} shows whether the tool supports reconstructing ITP-checked proofs from solver certificates (unsat cores and solver-produced proofs, respectively).
}
\label{tab:translation-comparison}
\scriptsize
\setlength{\tabcolsep}{4pt}
\renewcommand{\arraystretch}{1.15}
\begin{tabular}{@{}llccc|cc@{}}
\hline
& & \multicolumn{3}{c|}{\textbf{ITP $\rightarrow$ ATP}}
& \multicolumn{2}{c}{\textbf{ATP $\rightarrow$ ITP}} \\
\cline{3-7}
\textbf{ITP} & \textbf{Tool}
& \textbf{Poly.}
& \textbf{HO}
& \textbf{Ext. encoding}
& \textbf{Core-guided}
& \textbf{Replay} \\
\hline

Isabelle/HOL
& \texttt{smt}~\cite{bohme2010fast}
& \supported
& \supported
& \supported
& \unsupported
& \supported \\
\hline

\multirow{2}{*}{Rocq}
& SMTCoq~\cite{smtcoq}
& \unsupported
& \unsupported
& \supported
& \unsupported
& \supported \\

& CoqHammer~\cite{coqhammer}
& \supported
& \supported
& \unsupported
& \supported
& \unsupported \\
\hline

\multirow{4}{*}{Lean}
& lean-auto~\cite{lean-auto}\textsuperscript{*}
& \supported
& \unsupported
& \unsupported
& \unsupported
& \unsupported \\

& LeanHammer~\cite{leanhammer}\textsuperscript{**}
& \supported
& \supported
& \unsupported
& \supported
& \unsupported \\

& lean-smt~\cite{lean-smt}
& \supported
& \unsupported
& \supported
& \unsupported
& \supported \\
\cline{2-7}

& \sys
& \supported
& \supported
& \supported
& \supported
& \supported \\
\hline
\end{tabular}
\par\smallskip
\begin{minipage}{\linewidth}
\scriptsize
\textsuperscript{*}SMT backend of lean-auto.\par
\textsuperscript{**}Duper backend of LeanHammer; its replay support is provided separately by lean-smt.
\end{minipage}
\end{table}

Integrating ATPs with proof assistants, as in CoqHammer~\cite{coqhammer} and
SMTCoq~\cite{smtcoq}, combines automated search with kernel-checked proofs.
While offering merits from both PA and ATP, these integrations face two main challenges, shown in
Table~\ref{tab:translation-comparison}: (1) translating expressive proof
obligations into the solver's first-order logic and (2) reconstructing kernel-checkable
proofs from solvers' outputs.
The translation has to accommodate features like polymorphism and higher-order constructs admitted by PAs 
and provide extensible encodings that allow users to customize how new definitions are translated.
Proof reconstruction has to produce kernel-checkable derivations from certificates generated by the solvers,
which may provide only selected premises (\IE unsat cores), or rely on solver inference rules that have
to be mapped into the PA's proof system.

In Lean specifically, proof automation that leverages power of ATPs follows two broad approaches.
The first consists of tactics implemented natively in Lean, represented by \grind~\cite{grind-tactic} and \leaninline{aesop}~\cite{aesop-tactic}.
These tactics operate directly on Lean expressions and construct proof terms that are checked by the Lean kernel.
\grind combines congruence closure, E-matching, constraint propagation, case analysis, and specialized solvers for domains such as linear arithmetic and commutative rings.
\leaninline{aesop}, in contrast, constructs a best-first proof-search tree from an extensible set of rules.
Rules may be built from theorem applications, constructor introduction, term rewriting, or arbitrary tactics.
These tactics are native to Lean, but their automation is
limited to the search procedures, registered rules, and decision procedures
implemented within Lean.

The second category consists of hammer-style tactics, such as lean-auto~\cite{lean-auto} and \leansmt~\cite{lean-smt}.
These tactics collect local hypotheses and the negated proof goal, translate the supported fragment of Lean into a solver language such as SMT-LIB~\cite{smtlib2}, and invoke an external SMT solver.
This architecture allows Lean to benefit from mature decision procedures and proof-search heuristics developed for ATPs.
When the SMT solver reports unsatisfiability (UNSAT), this fact could either be trusted directly, or be witnessed by a reconstructed proof, using solver's certificates, that is subsequently and independently checked by the Lean kernel.
The translation boundary of existing approaches suffers from 
a limitation: Lean constructs, including higher-order functions and
user-defined abstractions, must be represented in the solver's supported logic.
The applicability and soundness guarantees
of these tactics are therefore limited by
the coverage and extensibility of both the translation and proof reconstruction procedures.



To address these gaps, we have developed \sys, an SMT-backed proof automation framework
that combines translation support for higher-order functions 
with user-extensible encodings, and
supports both unsat core-guided proof reconstruction and proof replay from solver inference steps.
Users can customize how their definitions are encoded and how the resulting
solver proofs are reconstructed in Lean.
Our design has three key capabilities.
First, \sys supports nondependent higher-order functions, including
function-valued arguments, partial applications, and lambda abstractions. Through defunctionalization, it represents these
functions as values of uninterpreted SMT sorts and introduces application
symbols and equations describing their behavior.
We formalize this translation and prove its soundness.
Second, we provide an extensible translation interface that allows users to define customized encodings of Lean types, functions, and operators without modifying the core translator.
Third, we provide an extensible proof replay interface that allows users to define proof reconstruction procedures in the form of typed rewrite rules for terms introduced by their custom translation and for solver's inference
rules, yielding kernel-checkable proofs.

We evaluate \sys on four benchmark suites: a Curated set of proof obligations developed by us,
Cashmere~\cite{loom, Cashmere-github},
Velvet~\cite{velvet}, and PLean~\cite{plean}. These include 754 verification conditions (VCs) that contain mixed theories.
\sys discharges 709 (94.0\%) of the 754 VCs, with 693 (91.9\%) successfully reconstructed kernel-checked proofs.
In comparison, \leanauto, the strongest non-proof-producing tactic, closes only 622 (82.5\%), and \grind, the strongest proof-producing tactic, closes 641 (85.0\%).
With kernel-checked proofs, \sys also exceeds \leansmt~\cite{lean-smt} and Duper~\cite{duper} by 34.7 and 37.1 percentage points in overall coverage, respectively.

\noindent In summary, our contributions are:

\myparagraph{Extensible translation.}
We design a pattern-based interface for extending SMT encodings of Lean types
and terms.
These extensions combine with translation of higher-order constructs with defunctionalization
without modifying the core translator.
We also formalized soundness of the defunctionalization pass.

\myparagraph{Extensible proof replay.}
We design interfaces for interpreting terms in SMT solver proofs and reconstructing their inference steps in Lean. Users can extend the procedure to support custom encodings and additional solver inference rules.


\myparagraph{The \sys framework.}
We implement these ideas as a proof automation framework \sys in Lean.
\sys is publicly available on GitHub\footnote{\href{https://github.com/AD1024/lean-crush}{https://github.com/AD1024/lean-crush}}.

\section{Overview}
\label{sec:overview}

Fig.~\ref{fig:crush-pipeline} summarizes the end-to-end \sys pipeline.
Given a Lean goal, its hypotheses, and lemmas, 
\sys first prepares a set of propositions by negating the goal,
collecting all local hypotheses, and adding user-supplied
lemmas.
The pre-process pass performs normalization, monomorphization and a bounded quantifier instantiation on the collected formulas.
Appendix~\ref{app:query-preparation} describes these standard pre-processing steps in more detail.
\sys then translates processed formulas into an SMT query, where hypotheses and lemmas are conjoined with the negated goal, and employs defunctionalization to encode higher-order functions and their applications in first-order logic.
If the solver reports UNSAT, \sys can either
trust the solver or construct a Lean proof from an unsat core or by replaying the inference steps produced by the solver.
If solver reports SAT (with a model) or Unknown, 
the user may leverage external tools to fix or refine the proof.
\begin{figure}[t!]
  \centering
  \includegraphics[width=0.90\textwidth]{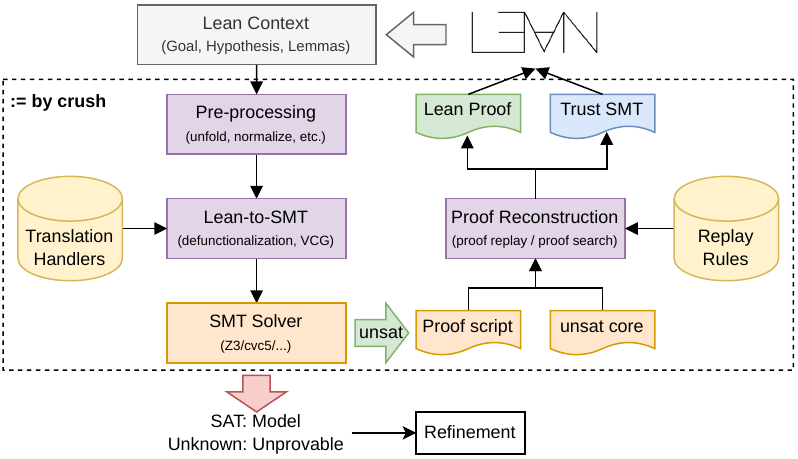}
  \caption{The \sys pipeline.}
  \label{fig:crush-pipeline}
\end{figure}


\Needspace{15\baselineskip}
\subsection{Running Example}

\begin{wrapfigure}[12]{r}{0.54\textwidth}
  \vspace{-\intextsep}
  \centering
  \captionsetup{skip=4pt}
  \begin{lstlisting}[
    frame=single,
    aboveskip=3pt,
    belowskip=3pt,
    numbers=left,
    name=overview,
    firstnumber=1,
    numberstyle=\tiny\color{gray},
    numbersep=6pt,
    xleftmargin=2em,
    framexleftmargin=1.5em,
    escapeinside={(*@}{@*)}]
def Nand (x y : BitVec 4) : BitVec 4 :=(*@\label{line:overview-operator-start}@*)
  ~~~(x &&& y)(*@\label{line:overview-operator-end}@*)

theorem computes_masked(*@\label{line:overview-theorem-start}@*)
    (run : (BitVec 4 → BitVec 4) →(*@\label{line:overview-callback}@*)
      BitVec 4 → BitVec 4)
    (hrun : ∀ f x, run f x = f x)(*@\label{line:overview-spec}@*)
    (x mask : BitVec 4) :
    run (fun y => Nand (Nand y mask)(*@\label{line:overview-callback-body}@*)
                      (Nand y mask)) x =
      x &&& mask := by(*@\label{line:overview-goal}@*)
  crush(*@\label{line:overview-tactic}@*)
\end{lstlisting}
  \caption{The running example: a callback that applies a bitvector mask.}
  \label{fig:overview-example}
\end{wrapfigure}

Fig.~\ref{fig:overview-example} defines a function NAND over two bitvectors $x, y$ (with width $4$),  
and states a property for \sys to prove. The hypothesis on
line~\ref{line:overview-spec} specifies that applying \leaninline{run} on function \leaninline{f} with \leaninline{x} gives the result of applying \leaninline{f} on \leaninline{x}.
The function on line~\ref{line:overview-callback-body} applies a mask to a given input $y$. Its two inner NAND applications produce the same complemented conjunction; the outer NAND recovers the conjunction of the input and mask. Thus, the left-hand-side application computes the
masked input.

\Needspace{9\baselineskip}
\subsection{Extensible Translation}

\begin{wrapfigure}[7]{r}{0.40\textwidth}
  \vspace{-\intextsep}
  \centering
  \captionsetup{skip=4pt}
  \begin{lstlisting}[
    frame=single,
    aboveskip=3pt,
    belowskip=3pt,
    numbers=left,
    name=overview,
    firstnumber=last,
    numberstyle=\tiny\color{gray},
    numbersep=6pt,
    xleftmargin=2em,
    framexleftmargin=1.5em,
    escapeinside={(*@}{@*)}]
register_lowering term <<(*@\label{line:overview-lowering-start}@*)
  (Nand (term x) (term y)) =>(*@\label{line:overview-lowering-pattern}@*)
    (smt| (bvnand $x $y))(*@\label{line:overview-lowering-quotation}@*)
>>
\end{lstlisting}
  \caption{A custom encoding from \texttt{Nand} to solver's \texttt{bvnand}.}
  \label{fig:overview-lowering}
\end{wrapfigure}

\sys supports extensible translation via rewrite rules to define SMT encodings for Lean terms.
Fig.~\ref{fig:overview-lowering} shows an example encoding for \leaninline{Nand}: 
the pattern on line~\ref{line:overview-lowering-pattern} matches applications of \leaninline{Nand}.
The sub-patterns \leaninline{(term x)} and \leaninline{(term y)} match the arguments and bind the translated results to \leaninline{x} and \leaninline{y}.
The right hand side pattern (line~\ref{line:overview-lowering-quotation}) defines the emitted SMT encoding that applies the solver's \leaninline{bvnand} operation with \leaninline{x} and \leaninline{y}, written in a DSL of an SMT-LIB subset embedded in Lean.
\S\ref{sec:translation} discusses this interface in more detail.

\subsection{Higher-Order Functions}

The translation is defined for bitvector operations, but the higher-order function
must also be represented in first-order SMT. 
To encode function values in first-order SMT, \sys introduces a
fresh first-order sort for each function type and a fresh operator for its
application. Then, it emits axioms that assign interpretations to the sort with respect to its definition in Lean.
This defunctionalization combines with the extensible translation to encode higher-order functions
and bitvector operations into theories supported by the solver.
\S\ref{sec:higher-order} presents the encoding in detail and proves its soundness.

\Needspace{13\baselineskip}
\subsection{Extensible Proof Reconstruction}

\begin{wrapfigure}[10]{r}{0.52\textwidth}
  \vspace{-\intextsep}
  \centering
  \captionsetup{skip=4pt}
  \begin{lstlisting}[
    frame=single,
    aboveskip=3pt,
    belowskip=3pt,
    numbers=left,
    name=overview,
    firstnumber=last,
    numberstyle=\tiny\color{gray},
    numbersep=6pt,
    xleftmargin=2em,
    framexleftmargin=1.5em,
    escapeinside={(*@}{@*)}]
register_crush_replay term <<(*@\label{line:overview-replay-start}@*)
  (bvnand (term x : BitVec _)
          (term y : BitVec _)) =>
    ~~~(x &&& y)(*@\label{line:overview-replay-lean-term}@*)
>>(*@\label{line:overview-replay-end}@*)

set_option crush.backend "cvc5"(*@\label{line:overview-options-start}@*)
set_option crush.trust "reconstruct"
set_option crush.reconstruct "alethe"(*@\label{line:overview-options-end}@*)
\end{lstlisting}
  \caption{Extending proof replay and selecting Alethe reconstruction.}
  \label{fig:overview-replay}
\end{wrapfigure}

The resulting query can now be handled by an SMT
solver. If it is unsatisfiable,
\sys supports two paths to obtain a kernel-checked proof.
Reconstruction from an
unsat core selects the relevant Lean hypotheses and uses Lean automation tactics (\EG \leaninline{grind}, \leaninline{omega})
to re-derive a proof.
Alternatively, \sys can replay inference steps produced by the solver. 
\sys currently supports the Alethe format~\cite{schurr2021alethe}.
As an inverse of Lean-to-SMT translation, \sys also supports extensible proof reconstruction via rewrite rules that map solver inference steps to a Lean term or proof.

Fig.~\ref{fig:overview-replay} shows a rewrite rule for the running example:
the pattern on lines 2--3 matches applications of \leaninline{bvnand} operation and its arguments.
The sub-patterns (\EG \leaninline{(term x : BitVec _)}) on the arguments match bitvectors of arbitrary known widths.
On the right hand side (line 4), the pattern defines the translated Lean term.
Finally, lines 7--9 configure \sys to enable proof reconstruction from Alethe scripts. 
\S\ref{sec:reconstruction} discusses the reconstruction procedures and the replay extension interface in detail.

\section{Extensible Lean-to-SMT Translation}
\label{sec:translation}

\sys provides extensible translation from Lean terms and types to SMT terms and sorts: unfolding a definition exposes the defining equation encoded in the SMT query.
Additionally, users may define custom encoding of terms and types as rewrite rules that match head symbols of a term or type application. 

\Needspace{14\baselineskip}
\subsection{Definition Unfolding}

\begin{wrapfigure}[10]{r}{0.50\textwidth}
  \vspace{-\intextsep}
  \centering
  \captionsetup{skip=4pt}
  \begin{lstlisting}[
    frame=single,
    aboveskip=3pt,
    belowskip=3pt,
    numbers=left,
    firstnumber=1,
    numberstyle=\tiny\color{gray},
    numbersep=6pt,
    xleftmargin=2em,
    framexleftmargin=1.5em,
    escapeinside={(*@}{@*)}]
@[crush_unfold](*@\label{line:unfold-clamp-attr}@*)
def clamp (x lo hi : Int) : Int :=
  if x < lo then lo
  else if hi < x then hi else x

theorem clamp_bounds (x lo hi : Int)
    (h : lo ≤ hi) :
    lo ≤ clamp x lo hi ∧
      clamp x lo hi ≤ hi := by
  crush(*@\label{line:unfold-clamp-crush}@*)
\end{lstlisting}
  \caption{Definition unfolding on functions.}
  \label{fig:unfold-clamp}
\end{wrapfigure}

The \leaninline{crush_unfold} attribute makes the defining equations of a function available whenever that function is called.
Fig.~\ref{fig:unfold-clamp} shows an example of an integer clamp: the annotation on
line~\ref{line:unfold-clamp-attr} directs \sys to include the definition of
\leaninline{clamp} in the emitted query. Then, the SMT solver proves the goal by the hypothesis on line 7 and the definition of \leaninline{clamp}.

\subsection{Head- and Type-Indexed Term Translation}

Let $e^\flat$ denote the SMT term produced from a Lean expression $e$, and $\tau^\flat$ denote the SMT sort used to represent a Lean type $\tau$.

\begin{figure}[t]
  \centering
  \begin{minipage}{0.94\textwidth}
  \begin{lstlisting}[
    frame=single,
    numbers=left,
    firstnumber=1,
    numberstyle=\tiny\color{gray},
    numbersep=6pt,
    xleftmargin=2em,
    framexleftmargin=1.5em,
    escapeinside={(*@}{@*)}]
register_lowering result-type with ctx <<
  Bool => do(*@\label{line:result-lowering-pattern}@*)
    match ctx.fn, ctx.args with
    | .const ``Bool.not _, #[p] =>(*@\label{line:result-lowering-not}@*)
        return some (smt| (not $(← ctx.emitTerm p)))
    | .const ``Bool.xor _, #[p, q] =>(*@\label{line:result-lowering-xor}@*)
        return some (smt| (xor $(← ctx.emitTerm p) $(← ctx.emitTerm q)))
    | _, _ => return none(*@\label{line:result-lowering-decline}@*)
>>
\end{lstlisting}
  \end{minipage}
  \caption{Grouping Boolean operations in a type-indexed lowering.}
  \label{fig:result-lowering}
\end{figure}

\myparagraph{Head-indexed translation.}
Translations on Lean terms can be defined as rewrite rules that match the head of an application.
For example, in Fig.~\ref{fig:overview-lowering}, the pattern on
line~\ref{line:overview-lowering-pattern} matches applications of \leaninline{Nand}
and translates its two operands, producing $(\texttt{Nand}\;x\;y)^\flat
  = \mathsf{bvnand}(x^\flat,y^\flat)$.

\myparagraph{Type-indexed translation.}
Rewrites can also be grouped based on the type of Lean terms.
For instance, in Fig.~\ref{fig:result-lowering}, the \leaninline{Bool} pattern on
line~\ref{line:result-lowering-pattern} matches both \leaninline{Bool.not} and
\leaninline{Bool.xor}. The handler then distinguishes these operations by their
application heads and arguments.
Lines~\ref{line:result-lowering-not} and~\ref{line:result-lowering-xor}
recursively translate the operands of negation and exclusive-or and construct
the corresponding SMT terms.
Line~\ref{line:result-lowering-decline} ignores
other Boolean terms, and the translation procedure will try other rewrite rules if the triggered rule returns \leaninline{none}.

\subsection{Coordinating Sort and Term Translation}

When users define custom types $\tau$ in Lean, \sys has to handle translations for both terms with type $\tau$ and the encoding of $\tau$.
With the extensible interface, \sys allows users to define rewrite rules over type applications to define encodings for user-defined Lean types. 

For example, suppose the user defines a Lean type $\mathsf{TotalMap}\;\kappa\;\nu$, whose values map keys of type $\kappa$ to values of type $\nu$.
Fig.~\ref{fig:map-lowering} shows the custom encoding of $\mathsf{TotalMap}\;\kappa\;\nu$ into the $\mathsf{Array}$ sort in SMT. The semantics of rewrite rules over Lean types is similar to rewrite rules over Lean terms.
The left-hand-side pattern on line~\ref{line:map-lowering-sort} matches the key and
value types, and the right-hand-side pattern on line~3 defines the corresponding SMT encoding.
Additionally, rewrite rules on lines~\ref{line:map-lowering-get} and \ref{line:map-lowering-set}
define encodings for applications of $\mathsf{get}$ and $\mathsf{set}$ functions associated with $\mathsf{TotalMap}$, translating $\mathsf{get}$ to \aletheinline{select} and $\mathsf{set}$ to \aletheinline{store}.
\Needspace{16\baselineskip}
\begin{wrapfigure}[14]{r}{0.55\textwidth}
  \vspace{-\intextsep}
  \centering
  \captionsetup{skip=4pt}
  \begin{lstlisting}[
    frame=single,
    aboveskip=3pt,
    belowskip=3pt,
    numbers=left,
    firstnumber=1,
    numberstyle=\tiny\color{gray},
    numbersep=6pt,
    xleftmargin=2em,
    framexleftmargin=1.5em,
    escapeinside={(*@}{@*)}]
register_lowering sort <<
  (TotalMap (sort key) (sort value)) =>(*@\label{line:map-lowering-sort}@*)
    (smt| (Array $key $value))
>>

register_lowering term <<
  (TotalMap.get _ _ (term m) (term k)) =>(*@\label{line:map-lowering-get}@*)
    (smt| (select $m $k))
>>

register_lowering term <<
  (TotalMap.set _ _ (term m)(*@\label{line:map-lowering-set}@*)
    (term k) (term v)) =>
    (smt| (store $m $k $v))
>>
\end{lstlisting}
  \caption{Encoding of $\mathsf{TotalMap}$.}
  \label{fig:map-lowering}
\end{wrapfigure}

\sys performs sort-checking on queries emitted by handlers before invoking the solver.
This light-weight checking catches incompatible sort and term translations,
such as erroneously applying an array operation (\EG \aletheinline{select}) to an integer.
However, \sys does not check that the SMT encoding is semantically correct: the user
is responsible for ensuring the correctness of extended translations.
For this example, since $\mathsf{TotalMap}$ is encoded as SMT $\mathsf{Array}$s, the user should ensure that $\mathsf{TotalMap.get}$ and $\mathsf{TotalMap.set}$ are 
consistent with the axioms of the array theory (\IE  $\mathsf{get}(\mathsf{set}(m,k,v),k)=v$). 


\section{Higher-Order Translation via Defunctionalization}
\label{sec:higher-order}
\label{sec:metatheory}

\sys uses defunctionalization to encode higher-order functions in first-order
SMT, introducing uninterpreted sorts for function types and replacing function values and their applications with uninterpreted symbols constrained by axioms. 
Our translation procedure supports the encoding for a subset of Lean with \textit{monomorphic}, \textit{nondependent} higher-order constructs.
We model this translation procedure in Lean and formalize its soundness.

\subsection{Defunctionalization}
\label{sec:defunctionalization}

\myparagraph{Function values and application.}
For a nondependent function type
$\theta=\sigma_1\to\cdots\to\sigma_n\to\tau$, where $\tau$ is not a
function type, \sys introduces an uninterpreted sort $\fnsort{\theta}$
whose elements represent function values.
We refer to these uninterpreted sorts as $\fnsort{}$ sorts.
Its application symbol has signature
\[
  \appsym{\theta} :
  \fnsort{\theta}\times
  \sigma_1^\flat\times\cdots\times\sigma_n^\flat
  \to\tau^\flat.
\]
An application $f\,a_1\cdots a_n$ becomes
$\appsym{\theta}(f,a_1^\flat,\ldots,a_n^\flat)$,
where $f$ is a function symbol.
Quantification over a function therefore becomes ordinary
first-order quantification over the corresponding uninterpreted sort $\fnsort{}$.

\myparagraph{Lambdas and captured variables.}
For each lambda $\ell=\lambda x_1\cdots x_n.\,e$, \sys introduces a fresh uninterpreted symbol $\clossym{\ell}$ of sort $\fnsort{\theta_\ell}$.
Then, \sys generates an axiom about the closure application according to its definition, where $\bar y$ are variables captured by $\ell$:
\[
  \forall\bar y,\bar x.\quad
  \appsym{\theta}(\clossym{\ell}(\bar y),\bar x)
  =e^\flat.
\]
Here the variables $\bar x, \bar y$ range over their translated sorts, and the body is translated
under the same bindings. The closure symbol $\clossym{\ell}$ is uninterpreted, and its semantics are axiomatized according to its definition.

\myparagraph{Partial application and function equality.}
\sys applies $\eta$-expansion\footnote{Converting a function $f$ to its normal form, \IE $f=\lambda x.\, f x$}
to partial applications, transforming them into closures.
Then, it reuses the encoding for lambdas to encode the $\eta$-expanded terms.
When the query compares function values for equality, \sys additionally emits an
extensionality axiom for the corresponding $\fnsort{}$ sort.

\myparagraph{Guarded Sub-types.} For types represented by a subset of an SMT sort, such as
$\mathsf{Nat}$ within $\mathsf{Int}$, the encoding represents inhabitants of the Lean types and constrains the 
application results accordingly. For example, \sys reuses SMT solver's integer theory to represent
$\mathsf{Nat}$, and for each query involving a $\mathsf{Nat}$ $x$, the assertion will be guarded by $x\geq 0$.

\myparagraph{Encoding the running example.}
Let $\mathsf{B}$ denote the SMT sort of $\mathsf{BitVec}\;4$ and
$\theta=\mathsf{BitVec}\;4\to\mathsf{BitVec}\;4$.
The application symbol has signature
$\appsym{\theta}:\fnsort{\theta}\times\mathsf{B}\to\mathsf{B}$.
For the fixed mask $m$ in Fig.~\ref{fig:overview-example}, the closure on
line~\ref{line:overview-callback-body}, denoted by $\ell$, is represented by
the nullary closure symbol $\clossym{\ell}:\fnsort{\theta}$.
Combining the closure body with the custom translation
in Fig.~\ref{fig:overview-lowering} gives
\[
  \forall y:\mathsf{B}.\quad
  \appsym{\theta}(\clossym{\ell},y)=
  \mathsf{bvnand}(\mathsf{bvnand}(y,m),\mathsf{bvnand}(y,m)).
\]
The \leaninline{run} function on line~\ref{line:overview-callback} becomes an uninterpreted
symbol $\widehat{\mathsf{run}}:\fnsort{\theta}\times\mathsf{B}\to\mathsf{B}$.
Its hypothesis and the negated goal become, respectively, 
\[
  \begin{array}{l}
  \forall f:\fnsort{\theta},\,y:\mathsf{B}.\quad
    \widehat{\mathsf{run}}(f,y)=\appsym{\theta}(f,y),\\[3pt]
  \widehat{\mathsf{run}}(\clossym{\ell},x)\ne\mathsf{bvand}(x,m).
  \end{array}
\]
Instantiating the hypothesis with $\clossym{\ell}$ and $x$, and the closure equation
with $x$ completes the proof by contradiction.

\subsection{Soundness of Defunctionalization}
\label{sec:mechanized-soundness}

\begin{figure}[t!]
\centering
\begin{minipage}{\linewidth}
\small
\setlength{\abovedisplayskip}{3pt}
\setlength{\belowdisplayskip}{3pt}
\setlength{\abovedisplayshortskip}{1pt}
\setlength{\belowdisplayshortskip}{3pt}
\hrule
\vspace{3pt}
\begin{minipage}[t]{0.49\linewidth}
\setlength{\arraycolsep}{2pt}
\figurepart{Syntax of higher-order\\language (HO)}
\[
  \begin{array}{rcl}
  \sigma,\tau&::=&\booltype\mid\beta\mid\sigma\to\tau,\\[2pt]
  e&::=&x\mid c\mid\mathsf{true}\mid\mathsf{false}\\
   &&{}\mid\lambda x:\tau.\,e\mid e\,e\\
   &&{}\mid e=e\mid\neg e\mid e\circ e\\
   &&{}\mid Qx:\tau.\,e.
  \end{array}
\]
\end{minipage}\hfill
\begin{minipage}[t]{0.49\linewidth}
\setlength{\arraycolsep}{2pt}
\figurepart{Syntax of first-order\\language (FO)}
\[
  \begin{array}{rcl}
  s&::=&\booltype\mid\beta\mid\fnsort{\theta},\\[2pt]
  k&::=&\widehat c\mid\appsym{\theta}\mid\clossym{\ell},\\[2pt]
  t&::=&x\mid\mathsf{true}\mid\mathsf{false}\\
   &&{}\mid k(t_1,\ldots,t_n)\\
   &&{}\mid t=t\mid\neg t\mid t\circ t\\
   &&{}\mid Qx:s.\,t.
  \end{array}
\]
\end{minipage}
\[
  \circ\in\{\land,\lor,\to,\leftrightarrow\}\qquad
  Q\in\{\forall,\exists\}
\]
\hrule
\vspace{3pt}
\figurepart{Denotational semantics ($\sembrackets{\cdot}_I$)}
\vspace{2pt}
\begin{minipage}[t]{0.49\linewidth}
\figurepart{HO Sorts}
\[
  \begin{array}{@{}l@{}}
  \typedenote{\booltype}{M}=\proptype,\\[1pt]
  \typedenote{\beta}{M}=M(\beta)\;\mbox{s.t. }M(\beta)\ne\varnothing,\\[1pt]
  \typedenote{\sigma\to\tau}{M}=
    \typedenote{\sigma}{M}\to\typedenote{\tau}{M}.
  \end{array}
\]
\end{minipage}\hfill
\begin{minipage}[t]{0.49\linewidth}
\figurepart{FO Sorts}
\[
  \begin{array}{@{}l@{}}
  \typedenote{\booltype}{N}=\proptype,\\[1pt]
  \typedenote{\beta}{N}=N(\beta)\;\mbox{s.t. }N(\beta)\ne\varnothing,\\[1pt]
  \typedenote{\fnsort{\theta}}{N}=N(\fnsort{\theta})\;\mbox{s.t. }N(\fnsort{\theta})\ne\varnothing.
  \end{array}
\]
\end{minipage}
\par\vspace{2pt}
\figurepart{Constants, abstractions, and applications}
\[
  \begin{array}{rcl}
  \denote{c}{M}{\rho}&=&M(c),\\[1pt]
  \denote{\lambda x:\tau.\,e}{M}{\rho}&=&
    (\mathsf{fun}\;v\Rightarrow\denote{e}{M}{\rho[x\mapsto v]}),\\[1pt]
  \denote{e_1\,e_2}{M}{\rho}&=&
    \denote{e_1}{M}{\rho}(\denote{e_2}{M}{\rho}),\\[1pt]
  \denote{k(t_1,\ldots,t_n)}{N}{\rho}&=&
    N(k)(\denote{t_1}{N}{\rho},\ldots,\denote{t_n}{N}{\rho}).
  \end{array}
\]
\par\vspace{2pt}
\figurepart{Variables and logical connectives (HO and FO)}
\[
  \begin{array}{rcl@{\qquad}rcl}
  \denote{x}{I}{\rho}&=&\rho(x),&
    \denote{\mathsf{false}}{I}{\rho}&=&\mathsf{False},\\[1pt]
  \denote{\mathsf{true}}{I}{\rho}&=&\mathsf{True},&
    \denote{\neg u}{I}{\rho}&=&\neg\denote{u}{I}{\rho}.
  \end{array}
\]
\[
  \begin{array}{rcl}
  \denote{u=v}{I}{\rho}&=&
    (\denote{u}{I}{\rho}=\denote{v}{I}{\rho}),\\[1pt]
  \denote{u\circ v}{I}{\rho}&=&
    \denote{u}{I}{\rho}\circ\denote{v}{I}{\rho},\\[1pt]
  \denote{Qx:\alpha.\,u}{I}{\rho}&=&
    Qv:\typedenote{\alpha}{I}.\;\denote{u}{I}{\rho[x\mapsto v]}.
  \end{array}
\]
\hrule
\end{minipage}
\caption{HO syntax (left), FO syntax (right), and their denotational semantics.}
\label{fig:formal-languages}
\end{figure}

In this section, we describe the formalization of the defunctionalization pass and 
show the proof skeleton of its soundness. 
Full formalization in Lean is available in our artifact\footnote{\href{https://github.com/AD1024/lean-crush/tree/fc4191c16577088972b112c0b941a2c931281132/Crush/Metatheory}{Crush/Metatheory}}.
We first formalize the higher-order (HO) and first-order (FO) languages shown on the left
and right of Fig.~\ref{fig:formal-languages}, respectively.
Both languages have Boolean literals, equality, logical connectives, and quantifiers.
HO models the fragment of Lean supported by this transformation.
In HO, $\beta$ ranges over base (non-arrow) sorts,
$\sigma,\tau$ over
HO sorts, $e$ over terms, and $c$ over typed constants.
FO uses FO sorts $s$ (including the same base sorts $\beta$ as HO) and terms $t$, replacing each HO function sort $\theta$ by
$\fnsort{\theta}$ as described earlier.
The symbols $k$ are uncurried constant symbols
$\widehat c$, application symbols $\appsym{\theta}$, and closure symbols
$\clossym{\ell}$. Each has type
$s_1\times\cdots\times s_n\to s$ and is fully applied, with $n=0$
for nullary symbols.

\begin{definition}[Denotation and satisfaction]
\label{def:denotational-semantics}
Let $I$ be a model of either HO or FO.
We write $\typedenote{\alpha}{I}$ for the carrier of sort $\alpha$, i.e.,
the set of values interpreting $\alpha$ in model $I$.
We write $\denote{u}{I}{\rho}$ for the denotation of term $u$ in model $I$
under valuation $\rho$. We use $\varnothing$ to denote an empty valuation.
For a collection $T$ of closed formulas in the corresponding language, define
\[
  I\models T\quad\Longleftrightarrow\quad
  \forall\varphi\in T.\;\denote{\varphi}{I}{\varnothing}.
\]
\end{definition}

A higher-order model $M$ assigns nonempty carriers $M(\beta)$ to base sorts
and values $M(c)$ to constants; arrow sorts denote full function spaces.
A first-order model $N$ assigns arbitrary nonempty carriers to base sorts and
$\fnsort{}$ sorts, and interpretations $N(k)$ to symbols.
For denotational semantics in Fig.~\ref{fig:formal-languages}, we write $I$ for a model of either
language, $u,v$ for its terms, and $\alpha$ for a sort in that language. A valuation
$\rho$ assigns values to variables. Boolean terms denote Lean propositions ($\mathsf{Prop}$),
and quantifiers extend the valuation with a value from the interpretation of
their binder sort.
The sort encoding $\tau^\flat$ leaves Boolean and base sorts unchanged,
with $\theta^\flat=\fnsort{\theta}$ for an HO function sort $\theta$.

\begin{figure}[t!]
\centering
\begin{minipage}{\linewidth}
\small
\setlength{\abovedisplayskip}{5pt}
\setlength{\belowdisplayskip}{5pt}
\setlength{\abovedisplayshortskip}{3pt}
\setlength{\belowdisplayshortskip}{5pt}
\hrule
\vspace{5pt}
\[
  \theta=\sigma_1\to\cdots\to\sigma_n\to\tau,\quad
  \tau\in\{\booltype,\beta\}.
\]
\figurepart{Function application}
\[
  \frac{
    f:\theta\qquad a_i:\sigma_i\qquad
    \defunc(a_i)=(a_i^\flat,C_{a_i},E_{a_i})\quad(1\leq i\leq n)
  }{
    \defunc(f\,a_1\cdots a_n)=
    \bigl(\appsym{\theta}(f,\bar a^\flat),\;
      \textstyle\bigcup_i C_{a_i},\;
      \textstyle\bigcup_i E_{a_i}\bigr)
  }
\]
\par\vspace{5pt}
\figurepart{Closure construction}
\[
  \frac{
    \begin{array}{c}
      \ell=\lambda\bar x.\,e:\theta\qquad e:\tau\qquad
      \bar y=\mathsf{FV}(\ell)\\[3pt]
      \defunc(e)=(e^\flat,C_e,E_e)
    \end{array}
  }{
    \defunc(\ell)=
      (\clossym{\ell}(\bar y),\;C_e\cup\{K_\ell\},\;E_e)
  }
\]
\[
  K_\ell=(\forall\bar y,\bar x.\;
    \appsym{\theta}(\clossym{\ell}(\bar y),\bar x)=e^\flat).
\]
\par\vspace{5pt}
\figurepart{Function equality}
\[
  \frac{
    \begin{array}{c}
      u,v:\theta\qquad
      \defunc(u)=(u^\flat,C_u,E_u)\qquad
      \defunc(v)=(v^\flat,C_v,E_v)
    \end{array}
  }{
    \defunc(u=v)=
      (u^\flat=v^\flat,\;C_u\cup C_v,\;E_u\cup E_v\cup\{\extaxiom{\theta}\})
  }
\]
\[
  \extaxiom{\theta}=
    (\forall f,g:\fnsort{\theta}.\;
    (\forall\bar x.\;\appsym{\theta}(f,\bar x)
      =\appsym{\theta}(g,\bar x))\to f=g).
\]
\hrule
\end{minipage}
\caption{Defunctionalization (selected rules).}
\label{fig:defunctionalization-rules}
\end{figure}

\begin{definition}[Defunctionalization]
\label{def:defunctionalization}
Defunctionalization maps a well-typed higher-order term $e:\tau$ to
$\defunc(e)=(e^\flat,C_e,E_e)$, where $e^\flat:\tau^\flat$ is a
first-order term with first-order sort $\tau^\flat$ and $C_e,E_e$ are sets of closed closure equations and
extensionality axioms, respectively. Core definitions are given in
Fig.~\ref{fig:defunctionalization-rules}. Complete definitions are shown in
Appendix~\ref{app:formal-languages}.
\end{definition}

In Fig.~\ref{fig:defunctionalization-rules}, $\theta$ has a complete
argument list $\bar\sigma$ and non-function result $\tau$.
Each $a_i^\flat$ is the first-order term returned by the recursive call
$\defunc(a_i)$, including defunctionalization when $a_i$ has function type.
Application of a function variable $f:\theta$ passes these terms to
$\appsym{\theta}$ and accumulates their generated axioms.

For a lambda $\ell=\lambda\bar x.\,e$, the closure symbol takes its free
variables $\bar y=\mathsf{FV}(\ell)$, and $K_\ell$ specifies
its behavior using the recursively translated non-function body $e$.
Variables of $\fnsort{}$ sorts, constants, and partial applications are $\eta$-expanded over their remaining arguments and use the same closure construction.
Appendix~\ref{app:formal-languages} gives the generalized rule,
including lambdas whose bodies still have function type.
Equality at function type $\theta$ additionally contributes
$\extaxiom{\theta}$ that encodes functional extensionality.
Logical connectives and quantifiers
preserve their operators, replacing quantifications to the corresponding first-order sorts.

\begin{definition}[First-order encoding]
\label{def:complete-encoding}
For a finite collection $T$ of closed higher-order formulas, let
$(\varphi^\flat,C_\varphi,E_\varphi)=\defunc(\varphi)$ and define
\[
  \theoryencoding(T)=\bigcup_{\varphi\in T}
    \bigl(C_\varphi\cup E_\varphi\cup\{\varphi^\flat\}\bigr).
\]
\end{definition}


\begin{definition}[Canonical model extension]
\label{def:canonical-extension}
For each lambda $\ell$ with free variables $\bar y=(y_1,\ldots,y_m)$,
let $\rho_{\ell,\bar v}$ be the valuation defined by
$\rho_{\ell,\bar v}(y_i)=v_i$ for $1\leq i\leq m$.
The extension $M^\star$ of $M$ interprets the generated symbols by
\[
  \begin{array}{rcl}
  \typedenote{\fnsort{\theta}}{M^\star} &=& \typedenote{\theta}{M},\\[2pt]
  M^\star(\appsym{\theta})(f,\bar a) &=& f\,a_1\cdots a_n,\\[2pt]
  M^\star(\clossym{\ell})(\bar v)
    &=& \denote{\ell}{M}{\rho_{\ell,\bar v}}.
  \end{array}
\]
Since corresponding HO and FO sorts have the same carriers in $M$ and
$M^\star$, we use $\rho$ for both valuations.
\end{definition}

\begin{lemma}[Preservation of denotations]
\label{lem:denotation-preservation}
For every well-typed higher-order term $e$, model $M$, and valuation $\rho$,
with $M^\star$ as in Definition~\ref{def:canonical-extension},
\[
  \denote{e^\flat}{M^\star}{\rho}=\denote{e}{M}{\rho}.
\]
\end{lemma}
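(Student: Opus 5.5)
We prove the lemma by structural induction on the typing derivation of $e$, generalizing over the valuation $\rho$. Generalizing is needed because the binder cases ($\lambda$ and $Q$) extend $\rho$. The statement is read at the sort $\tau$ of $e$: both sides lie in $\typedenote{\tau}{M}$, which by Definition~\ref{def:canonical-extension} coincides with $\typedenote{\tau^\flat}{M^\star}$. This holds for Boolean and base sorts because $M^\star$ agrees with $M$ on them, and for function sorts because $\typedenote{\fnsort{\theta}}{M^\star}=\typedenote{\theta}{M}$. The identity $\typedenote{\tau^\flat}{M^\star}=\typedenote{\tau}{M}$ is recorded first as a separate lemma, since it is what makes the equation well typed and what lets the same $\rho$ serve both sides.

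The routine cases come first.
\begin{itemize}
\item \emph{Variables and literals:} immediate.
\item \emph{Connectives and $\neg$:} apply the induction hypotheses componentwise.
\item \emph{Equality:} apply the induction hypotheses to both sides. The extensionality axiom $\extaxiom{\theta}$ lands in $E_e$ and does not affect $e^\flat$, so it plays no role here.
\item \emph{Quantifiers $Qx:\tau.\,e$:} the translation quantifies over $\tau^\flat$. By the carrier identity the two quantifiers range over the same set, and the induction hypothesis at $\rho[x\mapsto v]$ closes the case.
\item \emph{Constants at base sort:} these translate to $\widehat c$. We take $M^\star(\widehat c)=M(c)$, which is implicit in calling $M^\star$ an extension of $M$.
\end{itemize}

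The substantive cases are applications and closures.

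\emph{Full application} $f\,a_1\cdots a_n$ with $f:\theta$ and non-function result. We have $\denote{\appsym{\theta}(f,\bar a^\flat)}{M^\star}{\rho}=M^\star(\appsym{\theta})(\rho(f),\denote{a_1^\flat}{M^\star}{\rho},\ldots)$. Unfolding the definition of $M^\star(\appsym{\theta})$ gives $\rho(f)\,\denote{a_1^\flat}{M^\star}{\rho}\cdots$. The induction hypotheses on the $a_i$ then turn this into $\rho(f)\,\denote{a_1}{M}{\rho}\cdots$, which is the curried HO denotation.

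\emph{Closure} $\ell=\lambda\bar x.\,e$ with free variables $\bar y$. Here $\denote{\clossym{\ell}(\bar y)}{M^\star}{\rho}=\denote{\ell}{M}{\rho_{\ell,\rho(\bar y)}}$. We need this to equal $\denote{\ell}{M}{\rho}$, so we prove an auxiliary \emph{coincidence lemma}: the HO denotation depends only on $\rho$ restricted to free variables. Since $\rho$ and $\rho_{\ell,\rho(\bar y)}$ agree on $\mathsf{FV}(\ell)$, the case follows.

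Notably, the closure case never invokes the induction hypothesis on the body $e$. The body's translation matters only for validating $K_\ell$, which belongs to a later soundness lemma and not to this one.

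$\eta$-expanded constants, function variables, partial applications, and lambdas with function-typed bodies all reduce to the closure case, using the generalized rule in Appendix~\ref{app:formal-languages}. The translation is then $\clossym{\ell}$ applied to the free variables of the $\eta$-expanded term. Its value is the HO denotation of that lambda, which equals the original term's denotation by $\eta$ in the full function space, i.e.\ Lean's definitional $\eta$ or funext.

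The main obstacle is that the translation's recursion is not purely structural: $\eta$-expansion produces new terms, and the closure symbol is indexed by the lambda itself. To handle this, we plan to induct on a size measure, or on the typing derivation of the original term, rather than on raw syntax. In the formalization we also expect friction in stating the heterogeneous carrier equality, where $\typedenote{\tau^\flat}{M^\star}$ and $\typedenote{\tau}{M}$ are only propositionally equal. Our first attempt is to define $\typedenote{\fnsort{\theta}}{M^\star}$ so that the two reduce definitionally to the same type, which avoids casts throughout the induction.
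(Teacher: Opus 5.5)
Your proof is correct and is essentially the paper's argument. The shared steps are the carrier identity, unfolding $M^\star(\appsym{\theta})$ for applications, and the closure case via agreement on free variables followed by $\eta$-expansion. The paper also remarks that the closure term needs no hypothesis on the body. The one organizational difference is that the paper proves this lemma simultaneously with preservation properties for $\headapp$ and $\bodytrans$, by induction on translation derivations, because the $\bodytrans$ property is what Lemma~\ref{lem:generated-axioms} later needs. Since your closure case needs no hypothesis on the body, strong structural induction on $e$ suffices, so the non-structural recursion you flag as the main obstacle does not arise for this lemma.

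Do fill in the two full-application heads you skipped. A fully applied function-typed constant needs $M^\star(\widehat c)$ to be the uncurried form of $M(c)$. A $\beta$-redex head $\ell$ needs the closure case as the hypothesis on $\ell$ before you unfold $\appsym{\theta}$.
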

\begin{proofsketch}
We proceed by induction on typed terms. In the lambda case, we use the fact that denotation
depends only on free variables, so reconstructing the captured values suffices.
The full proof is shown in Appendix~\ref{app:proof-denotation}.
\end{proofsketch}

\begin{lemma}[Validity of generated axioms]
\label{lem:generated-axioms}
For every model $M$ and well-typed term $e$,
\[
  M^\star\models C_e\cup E_e.
\]
\end{lemma}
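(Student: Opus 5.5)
We argue by structural induction on the well-typed term $e$, following the clauses of $\defunc$ (Fig.~\ref{fig:defunctionalization-rules} and the complete definition in Appendix~\ref{app:formal-languages}). In every rule, $C_e\cup E_e$ is the union of the axiom sets of the immediate subterms plus at most one new axiom ($K_\ell$ or $\extaxiom{\theta}$). For variables, constants without $\eta$-expansion, and the literals, both sets are empty. The inductive hypothesis therefore handles every inherited axiom, and it remains to validate the freshly generated ones. One technical point must be checked: the subterm axioms are closed formulas in which only the generated symbols are interpreted by $M^\star$. Hence their truth does not depend on the surrounding binders, and the induction can go through quantifier and lambda cases without carrying valuations.

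\emph{Extensionality axioms.} For $\extaxiom{\theta}$, take $f,g\in\typedenote{\fnsort{\theta}}{M^\star}=\typedenote{\theta}{M}$ with $\theta=\sigma_1\to\cdots\to\sigma_n\to\tau$. Assume $M^\star(\appsym{\theta})(f,\bar a)=M^\star(\appsym{\theta})(g,\bar a)$ for all $\bar a$. By Definition~\ref{def:canonical-extension} this says $f\,a_1\cdots a_n=g\,a_1\cdots a_n$ for all $\bar a$. Applying Lean's function extensionality $n$ times, peeling one argument at a time, gives $f=g$. This uses that $\typedenote{\theta}{M}$ is the full function space.

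\emph{Closure equations.} This is the main case. Fix $\ell=\lambda\bar x.\,e$ with $\bar y=\mathsf{FV}(\ell)$, and values $\bar v,\bar a$. Let $\rho$ map $\bar y\mapsto\bar v$ and $\bar x\mapsto\bar a$; this is the valuation under which the body of $K_\ell$ is evaluated. The left side is
\[M^\star(\appsym{\theta})(M^\star(\clossym{\ell})(\bar v),\bar a)=\denote{\ell}{M}{\rho_{\ell,\bar v}}\,a_1\cdots a_n,\]
which unfolds by the lambda semantics to $\denote{e}{M}{\rho_{\ell,\bar v}[\bar x\mapsto\bar a]}$. The right side is $\denote{e^\flat}{M^\star}{\rho}$, which equals $\denote{e}{M}{\rho}$ by Lemma~\ref{lem:denotation-preservation}. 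The two valuations $\rho_{\ell,\bar v}[\bar x\mapsto\bar a]$ and $\rho$ agree on $\mathsf{FV}(e)\subseteq\bar y\cup\bar x$. The coincidence lemma (denotation depends only on free variables, already used in the proof of Lemma~\ref{lem:denotation-preservation}) then closes the case.

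The $\eta$-expanded cases (partial applications, constants, variables of function sort) reduce to this closure case applied to the expanded lambda. The generalized rule, in which a lambda body still has function type, is handled the same way, using the uncurried full argument list.

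The expected obstacle is bookkeeping rather than mathematics. It has three parts: aligning the variable order of $\bar y,\bar x$ in $\rho_{\ell,\bar v}$ with the binder order in $K_\ell$; handling variable shadowing, where some $x_i$ coincides with a captured name, so that the update order of $\rho$ matters; and checking that the curried-versus-uncurried application agrees. I would first prove a small lemma that iterated valuation updates over disjoint names commute. Capture-avoidance then follows from $\bar y=\mathsf{FV}(\ell)$ being disjoint from $\bar x$ by definition.
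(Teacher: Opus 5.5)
Your handling of the base cases, the inherited closed axioms, the extensionality axiom, and the closure equation for the simplified rule of Fig.~\ref{fig:defunctionalization-rules} matches the paper's argument. Your coincidence step on $\mathsf{FV}(e)\subseteq\bar y\cup\bar x$ is equivalent to the paper's step of agreement on $\ell$ followed by $\eta$.

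The step that fails as written is the claim that the $\eta$-expanded cases and the generalized rule ``reduce to'' that closure case. In the complete definition, Eq.~(\ref{eq:defunc-closure}), the right-hand side of $K_\ell$ is the term $r$ returned by $\bodytrans(e,\bar z)$. It is not $\defunc$ of the body of the $\eta$-expanded lambda, for two reasons:
\begin{itemize}
\item Fresh parameters are passed as raw first-order variables even at function type (Eq.~(\ref{eq:bodytrans-application})). By contrast, $\defunc(f\,\bar z)$ would wrap a function-typed $z_i$ in a closure $\clossym{\ell'}(z_i)$ and generate an extra axiom $K_{\ell'}$.
\item Nested binders are opened by substitution $b[z_0/x]$ (Eq.~(\ref{eq:bodytrans-ground-lambda})).
\end{itemize}
So the formula your reduction validates is not the one actually generated. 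Moreover, Lemma~\ref{lem:denotation-preservation}, which speaks only about $\defunc$, does not give you $\denote{r}{M^\star}{\rho[\bar z\mapsto\bar w]}=\denote{e}{M}{\rho}\,w_1\cdots w_n$.

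The missing ingredient is the body-translation preservation property Eq.~(\ref{eq:bodytrans-preservation}). The paper proves it simultaneously with Lemma~\ref{lem:denotation-preservation}, together with Eq.~(\ref{eq:headapp-preservation}) for $\headapp$. With it, your closure computation goes through essentially verbatim, with $r$ in place of $e^\flat$.

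A related point concerns your induction principle. Plain structural induction on $e$ does not give you the hypothesis for the axiom sets returned by $\bodytrans(b[z_0/x],\bar z)$, because $b[z_0/x]$ is not a subterm of $\lambda x.\,b$. Your claim that every rule only unions the sets of immediate subterms is therefore false for the generalized closure rule. The paper avoids this by inducting on translation derivations, mutually over $\defunc$, $\headapp$ and $\bodytrans$. Induction on term size would also work, since renaming preserves size.
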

\begin{proofsketch}
We establish closure equations using Definition~\ref{def:canonical-extension}
and Lemma~\ref{lem:denotation-preservation}.
In $M^\star$, the $\fnsort{}$ sorts are interpreted as function spaces and the application symbols as function application, so the extensionality axioms follow from function extensionality. The full proof appears in
Appendix~\ref{app:proof-axioms}.
\end{proofsketch}

We combine these properties to obtain the following result.

\begin{theorem}[Soundness]
\label{thm:defunctionalization-soundness}
For every finite collection $T$ of closed higher-order formulas, if
$\theoryencoding(T)$ is unsatisfiable, then $T$ is unsatisfiable.
\end{theorem}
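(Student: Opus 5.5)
We argue by contraposition: assume $T$ is satisfiable and build a first-order model of $\theoryencoding(T)$. Fix a higher-order model $M$ with $M\models T$, and let $M^\star$ be its canonical extension from Definition~\ref{def:canonical-extension}.

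First we check that $M^\star$ is a legitimate first-order model. Base and Boolean sorts keep their carriers from $M$, which are nonempty by assumption. Each sort $\fnsort{\theta}$ is interpreted as $\typedenote{\theta}{M}$, a full function space between nonempty sets, so it is nonempty (it contains constant functions). The symbols $\widehat c$, $\appsym{\theta}$ and $\clossym{\ell}$ receive interpretations of the right signatures: $\widehat c$ is interpreted as $M(c)$, uncurried, and the other two as in Definition~\ref{def:canonical-extension}.

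Next we verify $M^\star\models\theoryencoding(T)$. By Definition~\ref{def:complete-encoding}, it suffices to fix $\varphi\in T$ with $\defunc(\varphi)=(\varphi^\flat,C_\varphi,E_\varphi)$ and show three things.
\begin{itemize}
\item $M^\star\models C_\varphi\cup E_\varphi$: this is Lemma~\ref{lem:generated-axioms} applied to $e=\varphi$.
\item $\denote{\varphi^\flat}{M^\star}{\varnothing}$ holds: $\varphi$ is closed and $\defunc$ introduces no new free variables, so Lemma~\ref{lem:denotation-preservation} with $\rho=\varnothing$ gives $\denote{\varphi^\flat}{M^\star}{\varnothing}=\denote{\varphi}{M}{\varnothing}$. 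The right-hand side holds because $M\models T$.
\end{itemize}
Since each formula in $\theoryencoding(T)$ lies in one of these sets for some $\varphi\in T$, $M^\star$ satisfies all of $\theoryencoding(T)$. Hence $\theoryencoding(T)$ is satisfiable, which contradicts the hypothesis.

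The main obstacle is making sure the single model $M^\star$ works uniformly across all of $T$. Symbols generated for different formulas must not be interpreted inconsistently. This holds because $M^\star$ is defined globally: $\clossym{\ell}$ is indexed by the lambda $\ell$ itself, and $\appsym{\theta}$ by the type $\theta$. So a closure symbol shared by two formulas receives one interpretation, and both copies of its axiom $K_\ell$ are validated by it.

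A second, minor point is the equality case. Lemma~\ref{lem:denotation-preservation} is stated as an equality of $\mathsf{Prop}$s, so we use it directly and need no separate truth-preservation argument.
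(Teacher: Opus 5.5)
Your proposal is correct and follows the paper's argument: you assume $M\models T$, pass to the canonical extension $M^\star$, apply Lemma~\ref{lem:denotation-preservation} at the empty valuation to each $\varphi^\flat$ and Lemma~\ref{lem:generated-axioms} to $C_\varphi\cup E_\varphi$, and observe that the single globally defined $M^\star$ satisfies all of $\theoryencoding(T)$. Your explicit check that every $\fnsort{\theta}$ carrier is nonempty, so that $M^\star$ is a legitimate first-order model, is a small point the paper leaves implicit.
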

\begin{proofsketch}
We argue by contradiction. If $M\models T$, Lemma~\ref{lem:denotation-preservation} makes every translated
formula true in $M^\star$, while Lemma~\ref{lem:generated-axioms} supplies all auxiliary axioms. Thus $M^\star\models\theoryencoding(T)$, contradicting its unsatisfiability. The full proof is shown in Appendix~\ref{app:proof-soundness}.
\end{proofsketch}


\section{Proof Reconstruction}
\label{sec:reconstruction}

For a goal $G$ in Lean, \sys emits the translated hypotheses and $\neg G$ to the
solver. When the solver reports SAT, it provides a model of the encoded first-order formula, possibly spurious if an opaque
definition was not unfolded; when it reports UNKNOWN, the Lean goal is left unresolved. When it reports UNSAT, the solver may provide
an UNSAT core, 
a proof script, or both.
In the UNSAT case, \sys attempts to reconstruct a kernel-checked proof through two approaches: use the UNSAT core to select premises
and search the proof via tactics in Lean, or replay the solver's inference steps (proof scripts), translating them into Lean proof terms.


\subsection{Core-Guided Reconstruction}
\label{sec:core-reconstruction}

An UNSAT core contains assertions sufficient for a contradiction, usually much smaller than the input. \sys
maps them back to Lean hypotheses and searches for a proof by contradiction
from the hypotheses and the negated goal leveraging existing proof automation tactics (\EG \leaninline{omega} or \leaninline{grind}).
Appendix~\ref{app:reconstruction-procedures} provides details for the search procedure.
To extend search procedure, users may mark a theorem
as a reconstruction hint via \leaninline{@[crush_reconstruct]}. When the current goal matches the conclusion of a marked theorem, it is applied to the goal, and \sys proves the premises recursively.

\subsection{Extensible Solver Inference Replay}
\label{sec:alethe-replay}

Alternatively, \sys can replay the proof script generated by SMT solvers to reconstruct kernel-checked Lean proofs.
A proof script is a list of inference steps, where each inference step records a conclusion, a rule, and any premises or arguments used by the rule to derive the conclusion.
\sys relies on rewrite rules to translate terms and inference steps in proof scripts into Lean term and Lean proofs.
Users may extend those rewrite rules via \sys's extensible and type-aware interfaces.
Appendix~\ref{app:reconstruction-procedures} provides the details of the replay procedure.

\myparagraph{Term decoding.}
Users may specify translations from SMT operators in proof scripts into Lean terms via \leaninline{register_crush_replay term} command.
The translated Lean terms are further type-checked by the Lean kernel.
As an example, Fig.~\ref{fig:overview-replay} defines a handler for the SMT operator $\mathsf{bvnand}$, which requires the arguments to have Lean type $\mathsf{BitVec}$ and translates the expression into a complemented conjunction.

\begin{figure}[t!]
  \centering
  \begin{minipage}[t]{0.47\linewidth}
  \figurepart{Alethe inference}
  \begin{lstlisting}[language=Alethe,frame=single,numbers=none,xrightmargin=4pt,
    aboveskip=4pt,belowskip=0pt]
(step s
  (cl (= (bvnot (bvnot a)) a))
  :rule rare_rewrite
  :args ("bv-not-idemp" a))
\end{lstlisting}
  \end{minipage}\hfill
  \begin{minipage}[t]{0.50\linewidth}
  \figurepart{Lean inference handler}
  \begin{lstlisting}[frame=single,numbers=left,firstnumber=1,
    numberstyle=\tiny\color{gray},xleftmargin=1.5em,xrightmargin=4pt,
    aboveskip=4pt,belowskip=0pt,escapeinside={(*@}{@*)}]
register_crush_replay rule <<(*@\label{line:replay-rule-register}@*)
  (rare_rewrite "bv-not-idemp"(*@\label{line:replay-rule-pattern}@*)
    (term x : BitVec 4)) =>(*@\label{line:replay-rule-capture}@*)
  by exact BitVec.not_not (b := x)(*@\label{line:replay-rule-proof}@*)
>>
\end{lstlisting}
  \end{minipage}
  \caption{An inference from the running example's certificate and a replay handler.}
  \label{fig:reconstruction-rule}
\end{figure}

\myparagraph{Solver inference replay.}
Users may define translations of inference steps into a Lean proof with \leaninline{register_crush_replay rule} command.
For example, the left of Fig.~\ref{fig:reconstruction-rule} shows a use of \leaninline{"bv-not-idemp"} rewrite rule in the solver's inference step, which rewrites a SMT term $\mathsf{bvnot}(\mathsf{bvnot}(a))$ into $a$ and asserts its equality.
The right of Fig.~\ref{fig:reconstruction-rule} registers a rule to translate \leaninline{"bv-not-idemp"} in the proof script into a Lean proof of $\mathord{\sim}(\mathord{\sim}x)=x$ (where $\sim$ denotes bitwise complement in Lean), using the existing lemma \leaninline{BitVec.not_not}.

\myparagraph{Current limitations.}
The proof replay procedures in \sys 
rely on the SMT solver's ability to produce a proof script. For CVC5, its Alethe proof generation does not cover all theories it can solve.
Moreover, the built-in rules in
\sys's replay procedures 
may not cover every SMT encoding or inference step. 
In practice, 
these gaps can be 
filled by additional user-supplied rules.



\section{Evaluation}
\label{sec:evaluation}

We compare \sys against related proof automation frameworks on metrics for 
verification coverage, runtime, and proof reconstruction. We first compare overall coverage and runtime (\S\ref{sec:eval-coverage}) for each benchmark suite, and then compare \sys pair-wise with each baseline framework on the VCs solved by both (\S\ref{sec:eval-head-to-head}).
Finally, we examine how core-guided reconstruction and solver inference replay contribute to obtaining kernel-checked proofs (\S\ref{sec:eval-reconstruction}).

\begin{figure}[t!]
  \centering
  \includegraphics[width=0.495\linewidth]{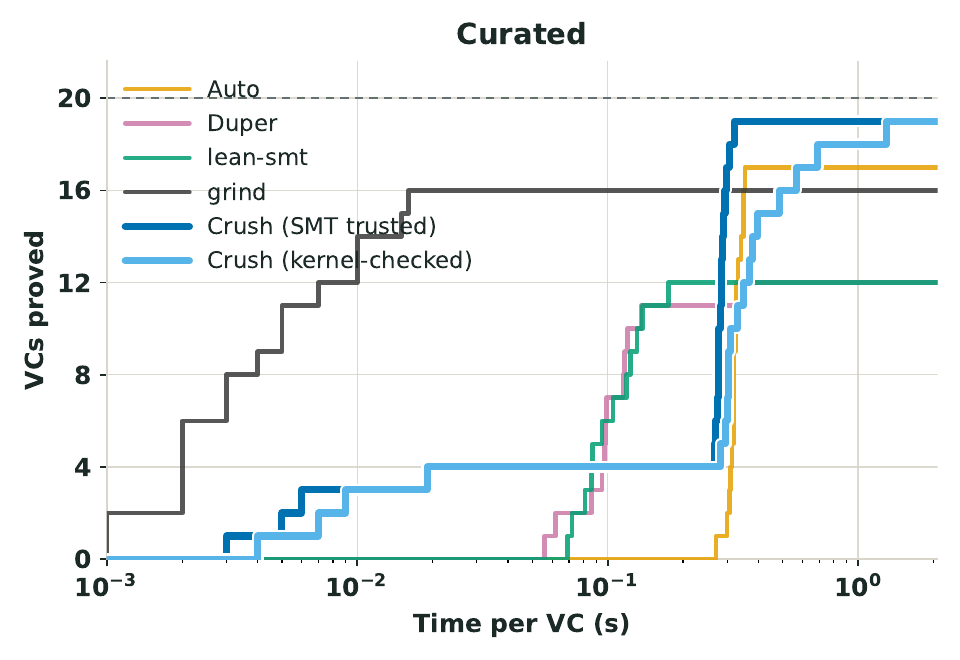}\hfill
  \includegraphics[width=0.495\linewidth]{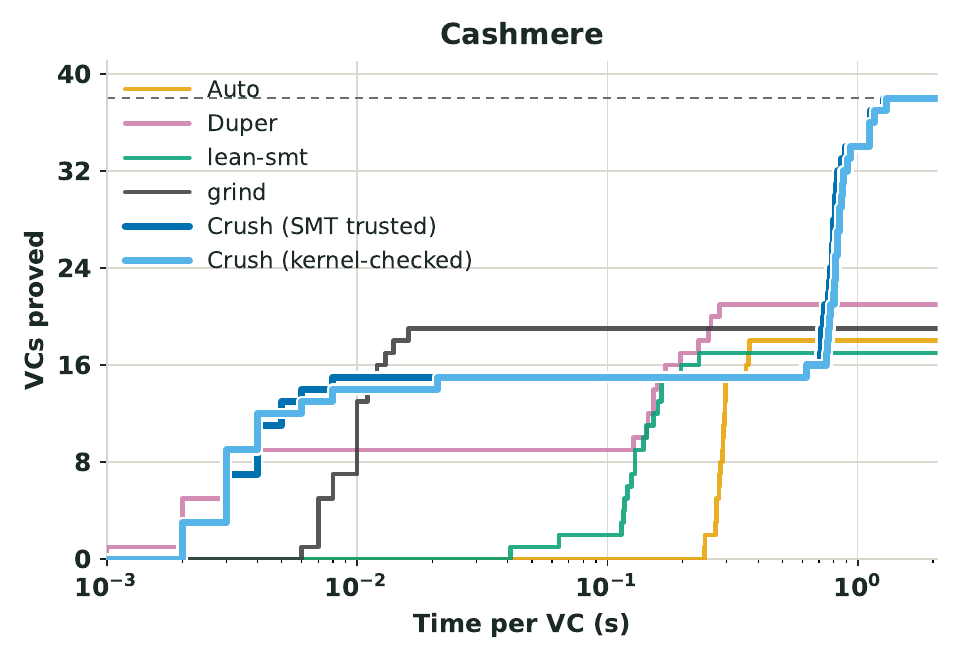}
  \par\smallskip
  \includegraphics[width=0.495\linewidth]{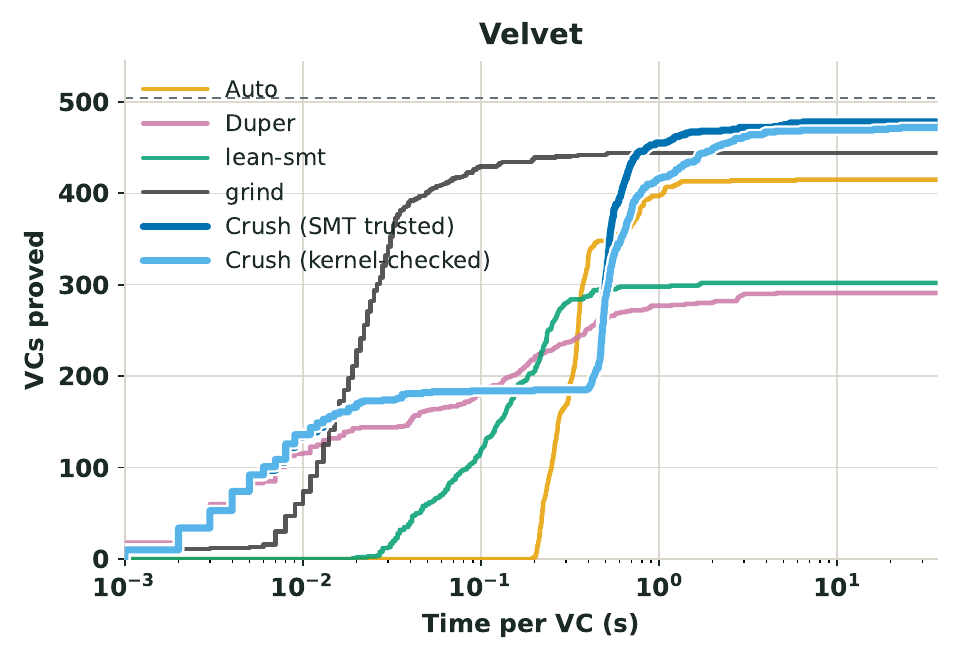}\hfill
  \includegraphics[width=0.495\linewidth]{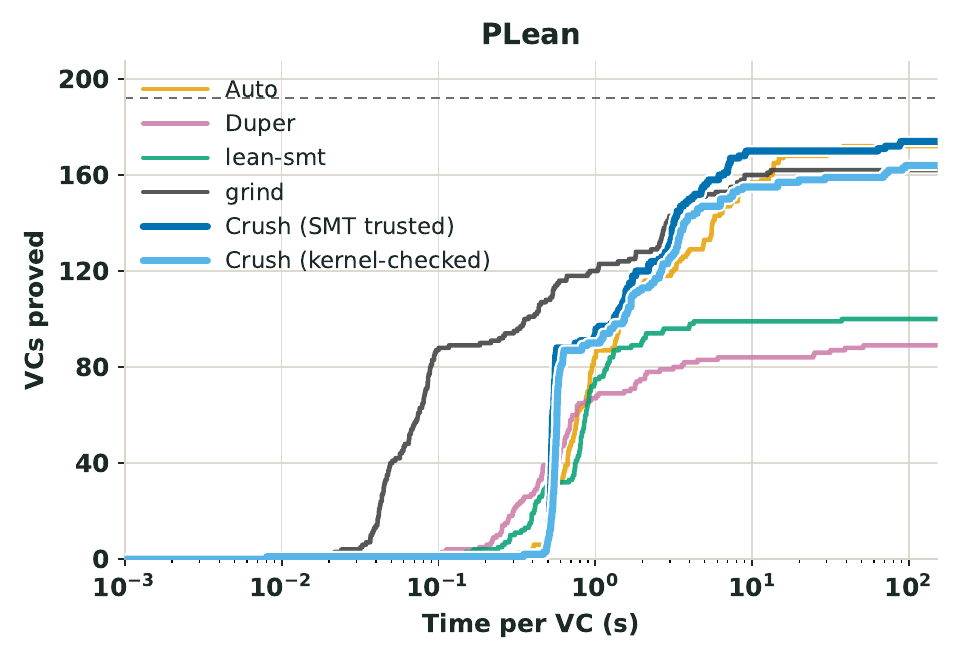}
  \caption{Verification coverage against time per VC.}
  \label{fig:eval-coverage}
\end{figure}

\myparagraph{Experimental setup.}
The benchmark comprises 754 VCs from four suites: Curated (20),
Cashmere~\cite{Cashmere-github} (38), Velvet~\cite{velvet} (504), and PLean~\cite{plean} (192).
We compare \sys with lean-auto (\leanauto)~\cite{lean-auto},
Duper~\cite{duper}, \leansmt~\cite{lean-smt}, and Grind (\grind)~\cite{grind-tactic}.
Curated is a standalone collection of Lean goals developed by us, covering higher-order functions, datatypes, arithmetic, bitvectors,
and strings.
Cashmere, Velvet and PLean use Loom~\cite{loom} in their VC generation. 
Cashmere and Velvet are case studies of Loom, covering VCs from program verification. 
PLean covers verification of distributed protocols written in P~\cite{ankush2013p} language embedded in Lean.
We plug \sys and the baselines into Loom as SMT backends to discharge the generated VCs.
We evaluate \sys both with SMT trusted (\IE without proof reconstruction) and with kernel-checked proof reconstruction.
The former trusts the solver's verdict and, like \leanauto, uses an axiom to close the goal.
The latter requires a kernel-checked proof of the original goal and combines core-guided reconstruction with replay of the solver's inference steps recorded in the Alethe proof format (\emph{Alethe replay}).
In this case, \sys is considered successful if it produces a kernel-checked proof for the original goal.

All experiments ran on an Apple M1 Pro CPU with 32\,GB of memory.
We use cvc5~1.3.4 as the SMT backend.
We use Duper revision \texttt{ca7c5862}, \leansmt revision
\texttt{e5025665}, and Lean~4.32.2 (except for \leanauto on Cashmere, Velvet, and PLean, for which we retain Lean~4.24.0 to preserve the existing Loom--lean-auto integration).
We set a limit of 5 seconds on each solver call and a limit of 1M Lean heartbeats on each tactic invocation.
We measure the time for each tactic on each VC, including translation, solver execution, and proof reconstruction.

\subsection{Verification Coverage and Runtime}
\label{sec:eval-coverage}

Fig.~\ref{fig:eval-coverage} shows the number of VCs discharged within each
time threshold per VC.
Here, the 
x-axis shows the time threshold in seconds for each VC on a log scale; the y-axis shows the number of VCs discharged within that time threshold.
The dashed horizontal line marks the total number of VCs in each benchmark suite.

Across the four benchmarks, \sys discharges 709 VCs (94.0\%) when trusting the solver's results,
and 693 (91.9\%) when enabling proof reconstruction. 
Among the compared baselines,
\grind achieves the highest coverage (641 VCs), followed by \leanauto (622), \leansmt (431), and Duper (413).
\emph{Notably, \sys achieves the highest overall coverage even when kernel-checked proofs are required.}
The runtime curves show a tradeoff: \grind discharges many VCs at shorter time thresholds, while \sys achieves higher overall coverage. The advantage also varies by benchmark suites. On PLean, \leanauto discharges more VCs than kernel-checked \sys (172 versus 164), although trusted \sys discharges 174.

\subsection{Pairwise Comparison with \sys}
\label{sec:eval-head-to-head}

\begin{table}[t!]
\centering
\caption{Head-to-head comparison with \sys using proof reconstruction.
Times are mean $\pm$ standard deviation across VCs discharged by both tools, after averaging repetitions per VC. 
}
\label{tab:eval-head-to-head}
\footnotesize
\setlength{\tabcolsep}{3pt}
\renewcommand{\arraystretch}{1.12}
\begin{tabular}{@{}llrrrrr@{}}
\hline
& & \multicolumn{3}{c}{\textbf{\# VCs Discharged}}
& \multicolumn{2}{c}{\textbf{Time on Both (ms)}} \\
\cline{3-5}\cline{6-7}
\noalign{\vskip 3pt}
\textbf{Benchmark} & \textbf{Baseline}
& \shortstack{\sys\\\textbf{only}} & \shortstack{\textbf{Baseline}\\\textbf{only}}
& \textbf{Both} & \sys & \textbf{Baseline} \\
\hline
\multirow{4}{*}{\shortstack[l]{Curated\\(20 VCs)}}
& \leanauto & 2 & 0 & 17 & $365.0_{\pm 298.1}$ & $322.2_{\pm 20.1}$ \\
& Duper & 8 & 1 & 11 & $313.1_{\pm 191.1}$ & $98.3_{\pm 24.1}$ \\
& \leansmt & 7 & 0 & 12 & $404.2_{\pm 338.6}$ & $106.8_{\pm 31.5}$ \\
& \grind & 3 & 0 & 16 & $371.4_{\pm 309.3}$ & $5.5_{\pm 4.8}$ \\
\hline
\multirow{4}{*}{\shortstack[l]{Cashmere~\cite{Cashmere-github}\\(38 VCs)}}
& \leanauto & 20 & 0 & 18 & $235.9_{\pm 395.8}$ & $294.0_{\pm 36.0}$ \\
& Duper & 17 & 0 & 21 & $252.9_{\pm 407.1}$ & $109.3_{\pm 102.6}$ \\
& \leansmt & 21 & 0 & 17 & $156.5_{\pm 352.4}$ & $135.6_{\pm 44.2}$ \\
& \grind & 19 & 0 & 19 & $223.6_{\pm 388.4}$ & $9.8_{\pm 2.7}$ \\
\hline
\multirow{4}{*}{\shortstack[l]{Velvet~\cite{velvet}\\(504 VCs)}}
& \leanauto & 70 & 13 & 402 & $598.5_{\pm 1400.7}$ & $372.5_{\pm 215.2}$ \\
& Duper & 184 & 3 & 288 & $287.0_{\pm 483.2}$ & $220.3_{\pm 517.4}$ \\
& \leansmt & 180 & 10 & 292 & $611.4_{\pm 1584.3}$ & $142.1_{\pm 95.9}$ \\
& \grind & 37 & 9 & 435 & $460.8_{\pm 595.6}$ & $26.1_{\pm 26.8}$ \\
\hline
\multirow{4}{*}{\shortstack[l]{PLean~\cite{plean}\\(192 VCs)}}
& \leanauto & 2 & 10 & 162 & $4517.8_{\pm 14086.1}$ & $3380.3_{\pm 5970.0}$ \\
& Duper & 75 & 0 & 89 & $644.6_{\pm 284.3}$ & $2755.4_{\pm 8169.2}$ \\
& \leansmt & 65 & 1 & 99 & $1611.3_{\pm 9096.7}$ & $1303.1_{\pm 3731.4}$ \\
& \grind & 2 & 0 & 162 & $4493.5_{\pm 14089.6}$ & $1206.5_{\pm 2327.6}$ \\
\hline
\end{tabular}
\end{table}

Table~\ref{tab:eval-head-to-head} compares \sys (with proof reconstruction)
against each baseline on the same VCs.
Runtimes are reported as mean $\pm$ standard deviation over VCs discharged by both tools.
\sys discharges more VCs in all pairs except against \leanauto on PLean.
The trusted-mode comparison is in Table~\ref{tab:eval-head-to-head-trusted},  Appendix~\ref{app:trusted-comparison}.

The pairwise results show that \sys can prove many obligations missed by the
baselines. Across the four suites, \sys proves 61 VCs that \grind cannot
close, while \grind proves 9 that \sys cannot close. On Velvet, \sys
proves 184 VCs missed by Duper and 180 missed by \leansmt; these baselines
prove only 3 and 10 VCs, respectively, that \sys misses.
On VCs solved by both tools, \sys has a higher mean runtime in most comparisons,
and \grind is faster on all four suites. The tradeoff varies by workload:
on PLean, \sys is slower on average than \leansmt (1611.3 versus 1303.1\,ms)
on the 99 VCs, but it also proves 65 VCs that \leansmt misses.

We observe that reconstruction failures mainly arise when cvc5 cannot produce a proof script and core-guided reconstruction also fails to recover a
proof. 
We examine these cases in more detail in the next section.

\subsection{Proof Reconstruction}
\label{sec:eval-reconstruction}

We evaluate how combining Alethe replay with core-guided reconstruction
affects proof coverage, examine the remaining failure modes, and report
the manual effort needed to define replay handlers.

\begin{figure}[t!]
  \centering
  \includegraphics[width=0.495\linewidth]{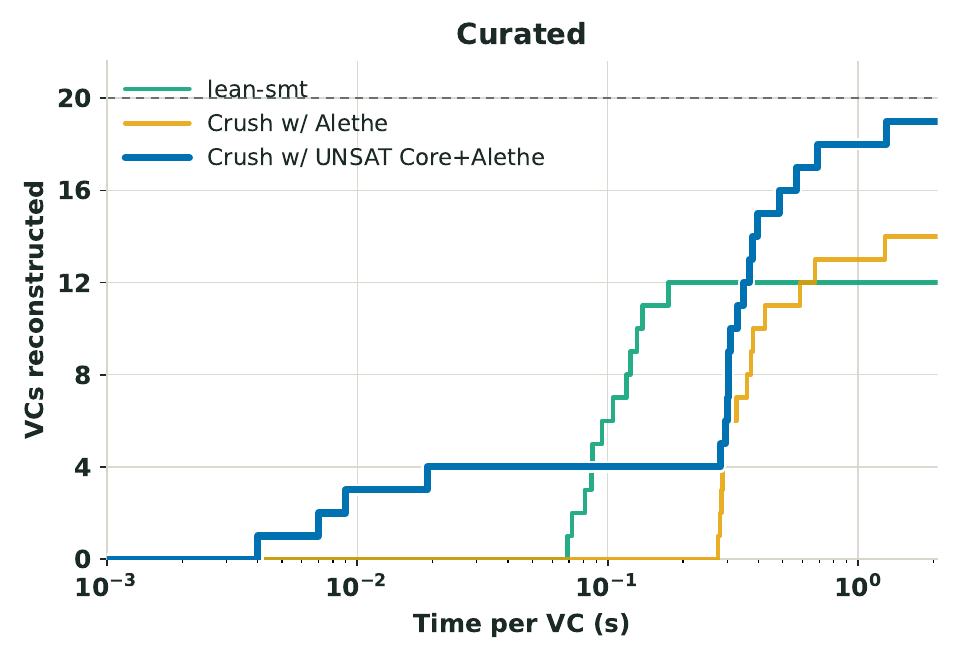}\hfill
  \includegraphics[width=0.495\linewidth]{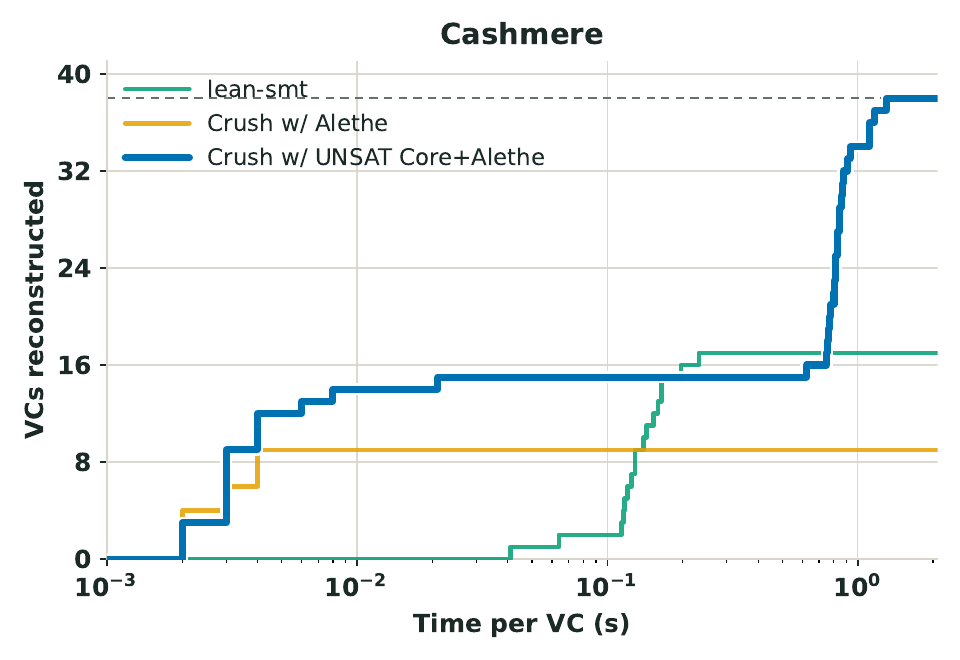}
  \par\smallskip
  \includegraphics[width=0.495\linewidth]{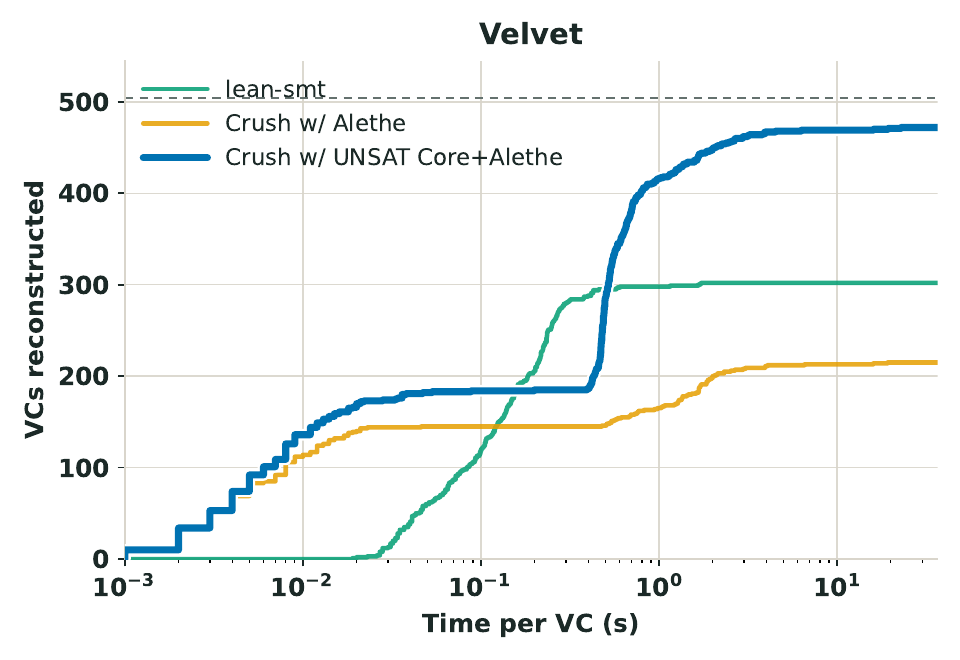}\hfill
  \includegraphics[width=0.495\linewidth]{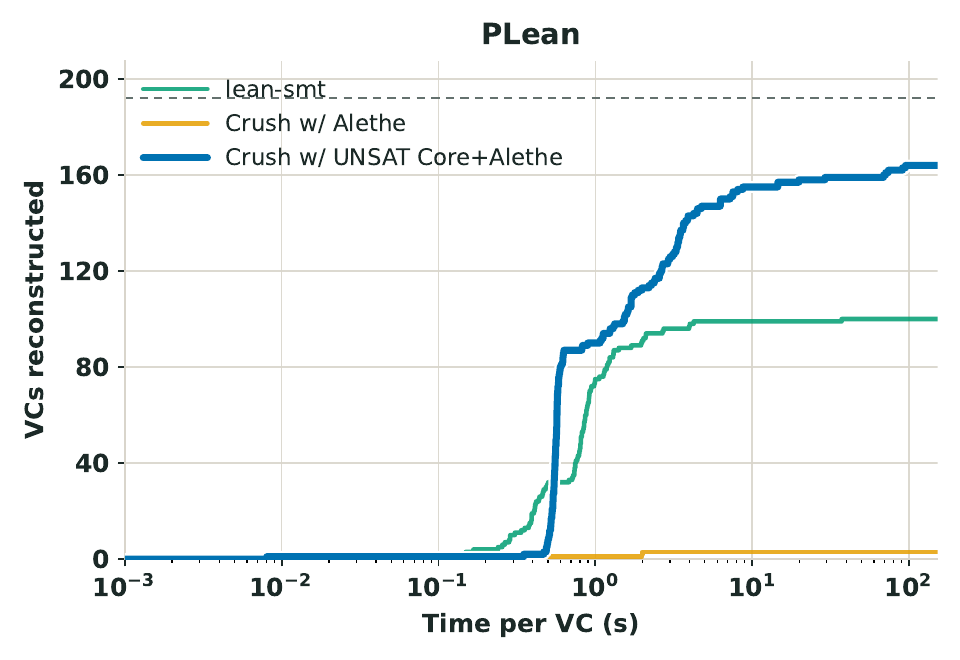}
  \caption{Kernel-checked proof reconstruction coverage over time per VC.}
  \label{fig:eval-reconstruction}
\end{figure}

\begin{table}[t!]
  \centering
  \caption{Proof reconstruction failures among VCs discharged by SMT.}
  \label{tab:eval-reconstruction-failures}
  \small
  \setlength{\tabcolsep}{5pt}
  \renewcommand{\arraystretch}{1.1}
  \begin{tabular}{@{}llr@{}}
    \hline
    \textbf{Method} & \textbf{Failure mode} & \textbf{VCs} \\
    \hline
    \multirow{2}{*}{Alethe replay}
      & Certificate generation & 431 \\
      & Term decoding & 1 \\
    \cline{2-3}
      & \textbf{Total} & \textbf{432 / 503} \\
    \hline
    \multirow{3}{*}{Combined}
      & Certificate generation and core-guided reconstruction & 11 \\
      & Lean heartbeat limit & 2 \\
      & Tactic failure during replay & 3 \\
    \cline{2-3}
      & \textbf{Total} & \textbf{16 / 503} \\
    \hline
  \end{tabular}
\end{table}

\myparagraph{Reconstruction coverage.}
Fig.~\ref{fig:eval-reconstruction} shows the number of VCs that have
proofs reconstructed by \sys and \leansmt in each case study, with the same x- and y-axes as in Fig.~\ref{fig:eval-coverage}.
In total, \sys closes 693 VCs when combining Alethe replay with core-guided reconstruction, compared with 241 using Alethe replay alone and 431 for \leansmt. 
Adding core-guided reconstruction therefore enables \sys to close more VCs.

\myparagraph{Failure analysis for Alethe replay.}
We further categorized the failure modes to identify 
specific limitations in the Alethe replay, 
and which limitations were overcome by core-guided reconstruction. 
Solver certificate generation caused 431 of the replay failures among the 503 VCs discharged by SMT.
Although cvc5 concludes unsatisfiability, it does not fully support proof generation in Alethe.
For example, it currently cannot generate a proof for the theory of arrays,
which is used in the Cashmere and Velvet suites.
Here, core-guided reconstruction often allows \sys to obtain kernel-checked proofs even when the solver's proof-generation support is insufficient.
The other replay failure stems from a \aletheinline{choice} expression in the certificate, which
chooses an element from a set.
\sys does not currently support decoding this expression, since it would require synthesizing a proof of non-emptiness of the set from which the element is chosen. 

In 11 VCs, certificate generation fails and core-guided reconstruction also fails to find a proof: 4 occur in Velvet and 7 in PLean.
The remaining failures are due to exhausted heartbeat budgets and tactic failures, caused by insufficient lemmas
for the built-in proof search procedures during replay.

\begin{table}[t!]
  \centering
  \caption{Implementation size of built-in Alethe replay extensions.}
  \label{tab:eval-replay-effort}
  \small
  \setlength{\tabcolsep}{5pt}
  \renewcommand{\arraystretch}{1.1}
  \begin{tabular}{@{}lrrrr@{}}
    \hline
    \textbf{Category} & \textbf{Handlers} & \textbf{Handler LoC}
      & \textbf{Lemma LoC} & \textbf{Total} \\
    \hline
    Assumption normalization & 2 & 16 & 34 & 50 \\
    Logic and equality & 10 & 47 & 10 & 57 \\
    Quantifiers & 1 & 3 & 22 & 25 \\
    Arithmetic & 17 & 68 & 311 & 379 \\
    Bitvectors & 13 & 72 & 337 & 409 \\
    Strings & 12 & 50 & 83 & 133 \\
    \hline
    \textbf{Total} & \textbf{55} & \textbf{256} & \textbf{797} & \textbf{1,053} \\
    \hline
  \end{tabular}
\end{table}

\myparagraph{Replay implementation effort.}
The built-in replay handler library covers assumption normalization, logic and equality,
quantifiers, arithmetic, bitvectors, and strings through the extension
interface (\S\ref{sec:alethe-replay}). The 55 manually defined handlers take 256 LoC, with another
797 LoC of supporting lemmas, for a total of 1,053 LoC 
(Table~\ref{tab:eval-replay-effort}).

\section{Related Work}
\label{sec:related-work}

\paragraph{ATP Integration in Proof Assistants.}
CoqHammer~\cite{coqhammer} translates Rocq goals through CIC$_0$ into untyped first-order logic.
Although there is a soundness proof for CoqHammer's core encoding, its implementation uses approximations that can compromise soundness, including omitted type guards in lambda-lifting equations. It ensures soundness by reconstructing kernel-checked Rocq proofs using ATP-selected premises.
\sys instead mechanizes the soundness of defunctionalization for its supported higher-order fragment.

SMTCoq~\cite{smtcoq} reifies Rocq goals into SMT formulas and validates solver certificates using a verified checker. Its modular architecture supports additional solvers and theories; it provides extensible preprocessing through Trakt~\cite{Blot_2023}.
It targets universal goals over supported SMT theories.
In contrast, \sys supports higher-order terms and existential quantifications, and natively provides extensible translation and proof replay.
Its handlers construct kernel-checked Lean proofs without requiring
separate soundness proofs for their implementations.


\paragraph{Automation Tactics in Lean.}
Several prior works attempted to close the gap between Lean and SMT solvers for better proof automation.
lean-auto~\cite{lean-auto} uses a monomorphization pass to translate dependent type theory into higher-order logic, but its SMT backend supports a restricted first-order fragment without SMT proof reconstruction or custom encoding interfaces.
lean-smt~\cite{lean-smt} uses lean-auto for monomorphization and provides extensible SMT translation, term decoding, and inference replay.
Its translation similarly requires a supported first-order fragment.
LeanHammer~\cite{leanhammer} combines premise selection, lean-auto's higher-order translation, Zipperposition~\cite{zipperposition} proof search, and Duper~\cite{duper} reconstruction from selected premises.
More recent versions also integrate lean-smt, hence inheriting its SMT translation and replay capabilities.

\sys is also broadly related to tactics that are natively implemented in Lean.
Grind~\cite{grind-tactic} implements several SMT solving procedures in Lean to perform proof search; Duper~\cite{duper} implements proof-producing superposition, similar to Isabelle's Metis~\cite{isabelle-metis}; Aesop~\cite{aesop-tactic} implements white-box proof search in Lean.


\section{Conclusion and Future Work}
\label{sec:conclusion}

We present \sys, an extensible proof automation framework for Lean.
It provides extensible translation interfaces for encoding Lean proof obligations into first-order SMT and leverages defunctionalization to support higher-order constructs. 
We have also provided a soundness proof of this translation procedure for the supported fragment of Lean.
\sys reconstructs proofs from solver certificates and supports proof search guided by unsat cores and replaying inference steps produced by proof-producing SMT solvers. 
On 4 benchmark suites with 754 verification conditions in Lean, \sys solves 709 (94.0\%) when trusting the SMT solver and 693 (91.9\%) when proof reconstruction is required, outperforming all the baseline approaches. 

\paragraph{Future work.} We aim to support dependent function types and extend the soundness formalization beyond defunctionalization to Lean-to-SMT-LIB translation.
For proof reconstruction, we plan to integrate tactics such as Duper\cite{duper}, support CPC ~\cite{cpc} certificates, and synthesize non-emptiness proofs for \texttt{choice} expressions in solver proof certificates.
We also aim to improve the feedback to the user when the SMT solver reports SAT, for example, by distinguishing true and spurious counter-examples.


\newpage
\section*{Data Availability}

The implementation of \sys, its mechanized proofs, benchmark data, and
scripts for reproducing the experiments are publicly available at
\url{https://github.com/AD1024/lean-crush/tree/tacas-ae}.
\bibliographystyle{splncs04}
\bibliography{references}

\clearpage
\appendix
\renewcommand{\theHsection}{appendix.\Alph{section}}
\renewcommand{\theHsubsection}{\theHsection.\arabic{subsection}}

\begin{center}
  \huge\textbf{Appendix}
\end{center}
\DoToC
\newpage

\section{Details of Pre-processing passes}
\label{app:query-preparation}

Given a Lean proof goal, \sys first constructs a set of propositions and pre-processes them before translating (defunctionalizing) them.

\subsection{Fact Collection and Normalization}

\sys collects the negated goal, local propositional hypotheses, and lemmas
supplied explicitly or selected from the environment. It retains their origins
so an unsat core can later be mapped back to the corresponding Lean facts.

\sys further applies normalization on the collected propositions.
If they use definitions that are marked with \leaninline{crush_unfold}, \sys will rewrite their applications using the definitions' equation lemmas and also expose these (quantified) equations as facts in future passes.
We note that such rewriting is performed by tactics that produce equality proofs checked by Lean kernel.
Moreover, \sys will not expose equations of unused definitions, preventing introduction of quantifiers and instantiations that are unproductive.

\emph{Example.}
Suppose $\mathit{double}(x)$ is defined as $x+x$ and marked with \leaninline{crush_unfold}.
For a goal $0 \leq \mathit{double}(x)$ under the hypothesis
$0 \leq x$, fact collection produces
\[
  0 \leq x
  \qquad\mbox{and}\qquad
  \neg\bigl(0 \leq \mathit{double}(x)\bigr).
\]
Because the marked definition occurs in the negated goal, normalization rewrites
the latter fact to $\neg(0 \leq x+x)$. The recorded equality proof connects
this normalized contradiction back to the original goal.

\subsection{Lemma Monomorphization}

Lean lemmas may be polymorphic, \IE quantifying over types, whereas SMT symbols have concrete sorts.
Translating the type variables of such a lemma into some uninterpreted sorts will make SMT solvers unable to use it for a concrete sort, \EG $\mathsf{Int}$.
\sys therefore collects the concrete data types in the proof obligation and specializes the type parameters of polymorphic lemmas at those types, maybe iteratively.
Every instance is obtained by applying the original Lean proof.
The numbers of instances and iterations are bounded by user-configurable options, which could reduce the completeness but still preserve soundness.
We note that function types are handled separately by the defunctionalization procedure in \S\ref{sec:higher-order}.

\emph{Example.}
Consider a polymorphic function $f$ and a hypothesis
\[
  h : \forall (\alpha : \mathsf{Type})\,(xs : \mathsf{List}\;\alpha),
      f\;\alpha\;xs = 0.
\]
If the goal is $f\;\mathsf{Int}\;ys=0$, where
$ys : \mathsf{List}\;\mathsf{Int}$, the occurrences of
$\mathsf{Int}$ and $\mathsf{List}\;\mathsf{Int}$ direct \sys to apply
$h$ to $\mathsf{Int}$. The resulting fact
\[
  h_{\mathsf{Int}} : \forall (xs : \mathsf{List}\;\mathsf{Int}),
      f\;\mathsf{Int}\;xs = 0
\]
uses the same concrete SMT symbols as the goal while retaining $h\;\mathsf{Int}$ in its Lean proof.

\subsection{Bounded Quantifier Instantiation}

Bounded quantifier instantiation supplies solver-side E-matching with a finite set of relevant ground
instances. E-matching requires a suitable ground term before instantiating a quantified
proposition, so it can miss a forward chain in which one instance produces the
trigger for the next. Therefore, \sys applies active bounded instantiation by matching quantified
lemmas against ground terms and proposition occurrences already in the query,
including newly generated instances.
Note that every instance is constructed by applying its quantified Lean proof to the selected arguments.

\sys takes a "two-shot" approach after quantifier instantiation. It first emits an SMT query without the quantified lemmas, avoiding unbounded solver-side instantiation. It closes the goal if this query is already UNSAT. If the query is SAT or unknown, \sys retries with the complete set of propositions including quantified lemmas.

\emph{Example.}
Let the selected lemmas and goal be
\[
\begin{array}{ll}
  \mathit{step} &: \forall x,\ R\;x\;(\mathit{next}\;x), \\
  \mathit{lift} &: \forall x\,y,\ R\;x\;y \rightarrow P\;y, \\
  \mathit{goal} &: \exists y,\ R\;a\;y \land P\;y.
\end{array}
\]
The ground term $a$ first instantiates $\mathit{step}$, producing
$R\;a\;(\mathit{next}\;a)$. That new occurrence fixes both arguments of
$\mathit{lift}$, producing $P\;(\mathit{next}\;a)$ in the next round.
The solver can then choose $\mathit{next}\;a$ as the existential witness.
Both generated facts carry applications of the original lemmas as Lean proofs.

\section{Formalization of Defunctionalization}
\label{app:defunctionalization-soundness}

We formalize typed higher-order and first-order languages and a translation between them, and prove Theorem~\ref{thm:defunctionalization-soundness}.

\subsection{Languages and Translation}
\label{app:formal-languages}

Fig.~\ref{fig:formal-languages} defines both languages and their semantics.
A higher-order signature assigns a type to each constant, and a context assigns
types to local variables. The Lean
representation is Intrinsically typed, \IE indexing terms by their signature, context, and type, so only
well-typed terms can be constructed. Formulas have type $\booltype$; they
are closed when their local context is empty.

First-order terms are likewise indexed by their context and result sort.
Generated symbols are identified by the original
constant, function type, or closure they represent, independently of concrete
SMT names. Their argument and result sorts are part of their types.
Valuations respect these type and sort assignments. In both languages,
quantifiers range over the full interpretation of their binder types or sorts,
and the valuation is extended with the bound value when evaluating the body.

\myparagraph{Recursive translation.}
The following rules give the full definition of $\defunc$, including the
cases summarized in Fig.~\ref{fig:defunctionalization-rules}.
Throughout, $\tau\in\{\booltype,\beta\}$ is a non-function sort and
$\theta=\sigma_1\to\cdots\to\sigma_n\to\tau$ with $n\geq1$.
Variables and constants of non-function sort, and Boolean literals, require
no recursive translation:
\[
  \inferrule{x:\tau}{\defunc(x)=(x,\emptyset,\emptyset)}
  \qquad
  \inferrule{c:\tau}{\defunc(c)=(\widehat c,\emptyset,\emptyset)}
\]
\[
  \inferrule{v\in\{\mathsf{true},\mathsf{false}\}}
       {\defunc(v)=(v,\emptyset,\emptyset)}
\]

\myparagraph{Applications.}
The helper $\headapp_h(\bar u)=(t,C^h,E^h)$ applies a head $h:\theta$
to a complete list of first-order arguments $\bar u$, returning a term
and the head's auxiliary axioms. Here $c$ is a constant, $f$ a bound
variable, and $\ell$ a lambda; $u_i$ has sort $\sigma_i^\flat$.
\begin{equation}
\label{eq:headapp-constant-variable}
  \inferrule{c:\theta}
       {\headapp_c(\bar u)=(\widehat c(\bar u),\emptyset,\emptyset)}
  \qquad
  \inferrule{f:\theta}
       {\headapp_f(\bar u)=(\appsym{\theta}(f,\bar u),\emptyset,\emptyset)}
\end{equation}
\begin{equation}
\label{eq:headapp-lambda}
  \inferrule{\ell:\theta\qquad\defunc(\ell)=(\ell^\flat,C_\ell,E_\ell)}
       {\headapp_\ell(\bar u)=(\appsym{\theta}(\ell^\flat,\bar u),C_\ell,E_\ell)}
\end{equation}
The application rule translates every argument before applying the head:
\begin{equation}
\label{eq:defunc-application}
  \inferrule{
      h:\theta\qquad
      \defunc(a_i)=(a_i^\flat,C_{a_i},E_{a_i})\quad(1\leq i\leq n)\\
      \headapp_h(\bar a^\flat)=(t,C^h,E^h)
  }{
    \defunc(h\,a_1\cdots a_n)=
    \bigl(t,\;C^h\cup\textstyle\bigcup_i C_{a_i},\;
      E^h\cup\textstyle\bigcup_i E_{a_i}\bigr)
  }
\end{equation}

\myparagraph{Equality and logical connectives.}
Equality at function sort additionally generates the extensionality axiom
$\extaxiom{\theta}$:
\[
  \inferrule{
    u,v:\tau\qquad
    \defunc(u)=(u^\flat,C_u,E_u)\qquad
    \defunc(v)=(v^\flat,C_v,E_v)
  }{
    \defunc(u=v)=(u^\flat=v^\flat,\;C_u\cup C_v,\;E_u\cup E_v)
  }
\]
\[
  \inferrule{
    u,v:\theta\qquad
    \defunc(u)=(u^\flat,C_u,E_u)\qquad
    \defunc(v)=(v^\flat,C_v,E_v)
  }{
    \defunc(u=v)=
    (u^\flat=v^\flat,\;C_u\cup C_v,\;E_u\cup E_v\cup\{\extaxiom{\theta}\})
  }
\]
\[
  \extaxiom{\theta}=(\forall f,g:\fnsort{\theta}.\;
    (\forall\bar x.\;\appsym{\theta}(f,\bar x)
      =\appsym{\theta}(g,\bar x))\to f=g).
\]
For logical connectives and quantifiers, let
\mbox{$\circ\in\{\land,\lor,\to,\leftrightarrow\}$} and $Q\in\{\forall,\exists\}$:
\[
  \inferrule{\defunc(p)=(p^\flat,C_p,E_p)}
       {\defunc(\neg p)=(\neg p^\flat,C_p,E_p)}
  \qquad
  \inferrule{\defunc(p)=(p^\flat,C_p,E_p)}
       {\defunc(Qx:\sigma.\,p)=(Qx:\sigma^\flat.\,p^\flat,C_p,E_p)}
\]
\[
  \inferrule{\defunc(p)=(p^\flat,C_p,E_p)\qquad
        \defunc(q)=(q^\flat,C_q,E_q)}
       {\defunc(p\circ q)=
          (p^\flat\circ q^\flat,C_p\cup C_q,E_p\cup E_q)}
\]
\myparagraph{Opening function bodies.}
To cover partial applications and lambdas whose bodies still have function
type, define a helper $\bodytrans(e,\bar z)=(r,C,E)$ that translates $e$
applied to fresh parameters $\bar z$ until its result is non-function.
The parameter types are the complete argument list of the type of $e$;
$\bar z$ is empty when $e$ has non-function type. Substitution below is
capture-avoiding.
\begin{equation}
\label{eq:bodytrans-ground-lambda}
  \inferrule{e:\tau\qquad\defunc(e)=(r,C,E)}
       {\bodytrans(e,())=(r,C,E)}
  \qquad
  \inferrule{\bodytrans(b[z_0/x],\bar z)=(r,C,E)}
       {\bodytrans(\lambda x.\,b,(z_0,\bar z))=(r,C,E)}
\end{equation}
For a non-lambda term $h\,\bar a$, the remaining parameters complete the
application. The list $\bar a=(a_1,\ldots,a_m)$ may be empty for a variable
or constant, and $\bar z$ contains the remaining $n-m$ parameters.
\begin{equation}
\label{eq:bodytrans-application}
  \inferrule{
      h:\theta\qquad m<n\\
      \defunc(a_i)=(a_i^\flat,C_{a_i},E_{a_i})\quad(1\leq i\leq m)\\
      \headapp_h(\bar a^\flat,\bar z)=(r,C^h,E^h)
  }{
    \bodytrans(h\,\bar a,\bar z)=
    \bigl(r,\;C^h\cup\textstyle\bigcup_i C_{a_i},\;
      E^h\cup\textstyle\bigcup_i E_{a_i}\bigr)
  }
\end{equation}
Fresh parameters are passed as first-order variables, including when
their types are function types.

\myparagraph{Closure construction.}
For any function-valued $e:\theta$, let $\ell$ be its lambda form obtained
by opening leading lambda binders and $\eta$-expanding over any remaining
arguments. Let $\bar y$ be its free variables and $\bar z$ fresh parameters
for all arguments of $\theta$.
\begin{equation}
\label{eq:defunc-closure}
  \inferrule{
    e:\theta\qquad \ell=\lambda \bar x.\ e\ \bar x \qquad\bar y=\mathsf{FV}(\ell)\qquad
    \bodytrans(e,\bar z)=(r,C,E)
  }{
    \defunc(e)=(\clossym{\ell}(\bar y),\;C\cup\{K_\ell\},\;E)
  }
\end{equation}
\[
  K_\ell=\forall\bar y,\bar z.\;
    \appsym{\theta}(\clossym{\ell}(\bar y),\bar z)=r.
\]
This generalizes the lambda clause of
Fig.~\ref{fig:defunctionalization-rules} and defines the function-valued
variable, constant, and partial-application cases. Captures are passed
directly as bound variables, without $\eta$-expansion. Opening nested lambda
binders together avoids intermediate closures.


\subsection{Model Extension}

For any higher-order model $M$, we construct its canonical first-order extension
$M^\star$ by retaining the carriers of base types and interpreting $\fnsort{\theta}$ by the
function space interpreting $\theta$. The interpretation of $\appsym{\theta}$
applies its function argument to the supplied arguments.
The interpretation of $\widehat c$ is the uncurried form of $M(c)$.
A closure symbol reconstructs the captured environment and evaluates the
corresponding lambda. This defines one interpretation for all generated
symbols, independent of the particular formula being translated (see Definition~\ref{def:canonical-extension}).

\subsection{Proof of Preservation of Denotations}
\label{app:proof-denotation}

\begin{proof}[of Lemma~\ref{lem:denotation-preservation}]
We fix a higher-order model $M$ and its extension $M^\star$ from
Definition~\ref{def:canonical-extension}. We first establish 3 valuation
identities by structural induction. For a type-preserving renaming $\pi$,
write $e[\pi]$ for the renamed term and
$(\rho\circ\pi)(x)=\rho(\pi(x))$. Then
\begin{equation}
\label{eq:denote-renaming}
  \denote{e[\pi]}{M}{\rho}=\denote{e}{M}{\rho\circ\pi}.
\end{equation}
Agreement on free variables gives the rule
\begin{equation}
\label{eq:denote-agreement}
  \inferrule{\forall x\in\mathsf{FV}(e).\;\rho_1(x)=\rho_2(x)}
       {\denote{e}{M}{\rho_1}=\denote{e}{M}{\rho_2}}.
\end{equation}
In the binder cases, the induction hypothesis applies after extending both
valuations with the same value. These identities also hold for FO terms.

For fresh $z$, the lambda and application semantics give
\[
  \begin{array}{rcl}
  \denote{\lambda z.\,e\,z}{M}{\rho}
    &=& (\mathsf{fun}\;v\Rightarrow\denote{e}{M}{\rho}\,v)\\
    &=& \denote{e}{M}{\rho}.
  \end{array}
\]
Repeated $\eta$-expansion therefore preserves the denotation of the lambda
form $\ell$ used in Eq.~(\ref{eq:defunc-closure}).

We prove preservation for $\defunc$ simultaneously with the following
properties of the two helpers, by induction on the translation derivations.
Each property quantifies over every valuation $\rho$ and $\bar w$:
\begin{equation}
\label{eq:headapp-preservation}
  \inferrule{\headapp_h(\bar u)=(t,C,E)}
       {\denote{t}{M^\star}{\rho}=
        \denote{h}{M}{\rho}\,
        \denote{u_1}{M^\star}{\rho}\cdots\denote{u_n}{M^\star}{\rho}}.
\end{equation}
\begin{equation}
\label{eq:bodytrans-preservation}
  \inferrule{\bodytrans(e,\bar z)=(r,C,E)}
       {\denote{r}{M^\star}{\rho[\bar z\mapsto\bar w]}=
        \denote{e}{M}{\rho}\,w_1\cdots w_n}.
\end{equation}
In Eq.~(\ref{eq:bodytrans-preservation}), $\bar z$ is fresh for $e$ and
$\bar w$ is any tuple of values of the corresponding sorts. For an empty
tuple, application to $\bar w$ leaves the value unchanged.

\myparagraph{Base Case: Variables, Constants, and Literals.}
For $x,c:\tau$ where $\tau$ is a non-function type, the base translation rules and
Fig.~\ref{fig:formal-languages} yield
\[
  \begin{array}{rcl}
  \denote{x^\flat}{M^\star}{\rho}&=\denote{x}{M}{\rho}=&\rho(x)\\
  \denote{c^\flat}{M^\star}{\rho}&=\denote{c}{M}{\rho}=& M(c)
  \end{array}
\]
The Boolean-literal rule preserves $\mathsf{True}$ and $\mathsf{False}$.
For $v_i=\denote{u_i}{M^\star}{\rho}$, their interpretations give
\[
  \begin{array}{rcl}
  \denote{\widehat c(\bar u)}{M^\star}{\rho}
    &=&M(c)\,v_1\cdots v_n,\\[2pt]
  \denote{\appsym{\theta}(f,\bar u)}{M^\star}{\rho}
    &=&\rho(f)\,v_1\cdots v_n.
  \end{array}
\]
This completes the proof of Eq.~(\ref{eq:headapp-preservation}) for constant and variable head (Eq.~(\ref{eq:headapp-constant-variable})).

\myparagraph{Closure Case.}
The induction hypothesis is Eq.~(\ref{eq:bodytrans-preservation}) for
the recursive call $\bodytrans(e,\bar z)$. Preservation of the closure term
itself follows directly from its interpretation.
In Eq.~(\ref{eq:defunc-closure}), let $\bar v=\rho(\bar y)$.
Since $\bar y=\mathsf{FV}(\ell)$, the captured valuation
$\rho_{\ell,\bar v}$ agrees with $\rho$ on every free variable of $\ell$.
Thus
\[
  \begin{array}{rcll}
  \denote{e^\flat}{M^\star}{\rho}
    &=&M^\star(\clossym{\ell})(\bar v)
    &\mbox{Eq.~(\ref{eq:defunc-closure}), Fig.~\ref{fig:formal-languages}}\\[2pt]
    &=&\denote{\ell}{M}{\rho_{\ell,\bar v}}
    &\mbox{Definition~\ref{def:canonical-extension}}\\[2pt]
    &=&\denote{\ell}{M}{\rho}
    &\mbox{Eq.~(\ref{eq:denote-agreement})}\\[2pt]
    &=&\denote{e}{M}{\rho}
    &\mbox{$\eta$-expansion}.
  \end{array}
\]
This covers lambdas and the $\eta$-expanded forms of function-valued
variables, constants, and partial applications.

\myparagraph{Lambda Case.}
The induction hypothesis for the recursive call $\defunc(\ell)$ is
$\denote{\ell^\flat}{M^\star}{\rho}=\denote{\ell}{M}{\rho}$ for every
$\rho$. Let $v_i=\denote{u_i}{M^\star}{\rho}$. The rule in
Eq.~(\ref{eq:headapp-lambda}) and the interpretation of $\appsym{\theta}$ give
\[
  \begin{array}{rcl}
  \denote{\appsym{\theta}(\ell^\flat,\bar u)}{M^\star}{\rho}
    &=&\denote{\ell^\flat}{M^\star}{\rho}\,v_1\cdots v_n\\[2pt]
    &=&\denote{\ell}{M}{\rho}\,v_1\cdots v_n.
  \end{array}
\]
This establishes Eq.~(\ref{eq:headapp-preservation}) for a lambda head.

\myparagraph{Application Case.}
The induction hypotheses are
$\denote{a_i^\flat}{M^\star}{\rho}=\denote{a_i}{M}{\rho}$ for every
argument $a_i$, and Eq.~(\ref{eq:headapp-preservation}) for
$\headapp_h(\bar a^\flat)$, all at every valuation $\rho$.
Applying them to Eq.~(\ref{eq:defunc-application}) gives
\[
  \begin{array}{rcl}
  \denote{(h\,\bar a)^\flat}{M^\star}{\rho}
    &=&\denote{h}{M}{\rho}\,
       \denote{a_1^\flat}{M^\star}{\rho}\cdots\denote{a_n^\flat}{M^\star}{\rho}\\[2pt]
    &=&\denote{h}{M}{\rho}\,
       \denote{a_1}{M}{\rho}\cdots\denote{a_n}{M}{\rho}\\[2pt]
    &=&\denote{h\,\bar a}{M}{\rho}.
  \end{array}
\]

\myparagraph{Equality Case.}
The induction hypotheses, for every $\rho$, are
\[
  \begin{array}{rcl}
  \denote{u^\flat}{M^\star}{\rho}&=&\denote{u}{M}{\rho},\\[2pt]
  \denote{v^\flat}{M^\star}{\rho}&=&\denote{v}{M}{\rho}.
  \end{array}
\]
The equality translation rule and Fig.~\ref{fig:formal-languages} give
\[
  \begin{array}{rcl}
  \denote{u^\flat=v^\flat}{M^\star}{\rho}
    &=&(\denote{u^\flat}{M^\star}{\rho}=\denote{v^\flat}{M^\star}{\rho})\\[2pt]
    &=&(\denote{u}{M}{\rho}=\denote{v}{M}{\rho})\\[2pt]
    &=&\denote{u=v}{M}{\rho}.
  \end{array}
\]
This applies to equality at both function and non-function sorts.

\myparagraph{Negation Case.}
The induction hypothesis is
$\denote{p^\flat}{M^\star}{\rho}=\denote{p}{M}{\rho}$ for every $\rho$.
By the negation translation rule and its semantics,
\[
  \begin{array}{rcl}
  \denote{\neg p^\flat}{M^\star}{\rho}
    &=&\neg\denote{p^\flat}{M^\star}{\rho}\\[2pt]
    &=&\neg\denote{p}{M}{\rho}\\[2pt]
    &=&\denote{\neg p}{M}{\rho}.
  \end{array}
\]

\myparagraph{Binary Connective Case.}
The induction hypotheses are preservation for $p$ and $q$ at every
$\rho$. For $\circ\in\{\land,\lor,\to,\leftrightarrow\}$,
\[
  \begin{array}{rcl}
  \denote{p^\flat\circ q^\flat}{M^\star}{\rho}
    &=&\denote{p^\flat}{M^\star}{\rho}\circ\denote{q^\flat}{M^\star}{\rho}\\[2pt]
    &=&\denote{p}{M}{\rho}\circ\denote{q}{M}{\rho}\\[2pt]
    &=&\denote{p\circ q}{M}{\rho}.
  \end{array}
\]

\myparagraph{Quantifier Case.}
The induction hypothesis is preservation for $p$ at every extended valuation:
\[
  \forall\rho,v.\quad
  \denote{p^\flat}{M^\star}{\rho[x\mapsto v]}
    =\denote{p}{M}{\rho[x\mapsto v]}.
\]
For $Q\in\{\forall,\exists\}$, corresponding sorts have the same carriers (\IE $\typedenote{\sigma^\flat}{M^\star}=\typedenote{\sigma}{M}$),
so the quantifier translation rule gives
\[
  \begin{array}{rcl}
  \denote{Qx:\sigma^\flat.\,p^\flat}{M^\star}{\rho}
    &=&Qv:\typedenote{\sigma^\flat}{M^\star}.\;
       \denote{p^\flat}{M^\star}{\rho[x\mapsto v]}\\[2pt]
    &=&Qv:\typedenote{\sigma}{M}.\;
       \denote{p}{M}{\rho[x\mapsto v]}\\[2pt]
    &=&\denote{Qx:\sigma.\,p}{M}{\rho}.
  \end{array}
\]

\myparagraph{Non-function Body Case.}
The induction hypothesis for $\defunc(e)=(r,C,E)$ is
\[
  \forall\rho.\quad\denote{r}{M^\star}{\rho}=\denote{e}{M}{\rho}.
\]
The non-function rule in Eq.~(\ref{eq:bodytrans-ground-lambda}) has no fresh
parameters, so this is exactly Eq.~(\ref{eq:bodytrans-preservation}).

\myparagraph{Lambda Body Case.}
The induction hypothesis for $\bodytrans(b[z_0/x],\bar z)=(r,C,E)$ is
\[
  \forall\rho',\bar w.\quad
  \denote{r}{M^\star}{\rho'[\bar z\mapsto\bar w]}
    =\denote{b[z_0/x]}{M}{\rho'}\,\bar w.
\]
Here, application to a tuple abbreviates successive application.
To prove Eq.~(\ref{eq:bodytrans-preservation}) for $\bodytrans(\lambda x.\,b,(z_0,\bar z))=(r,C,E)$, we need to prove
$$\forall\rho,w_0,\bar w.\quad\denote{r}{M^\star}{\rho[z_0\mapsto w_0,\bar z\mapsto\bar w]}=\denote{\lambda x.b}{M}{\rho}\,w_0\,\bar w.$$
Let $\rho'=\rho[z_0\mapsto w_0]$. The lambda rule in
Eq.~(\ref{eq:bodytrans-ground-lambda}), Eq.~(\ref{eq:denote-renaming}) give
\[
  \begin{array}{rcl}
  \denote{r}{M^\star}{\rho[z_0\mapsto w_0,\bar z\mapsto\bar w]}
    &=&\denote{r}{M^\star}{\rho'[\bar z\mapsto\bar w]}\\[2pt]
    &=&\denote{b[z_0/x]}{M}{\rho'}\,\bar w\\[2pt]
    &=&\denote{b[z_0/x]}{M}{\rho[z_0\mapsto w_0]}\,\bar w\\[2pt]
    &=&\denote{b}{M}{\rho[x\mapsto w_0]}\,\bar w\\[2pt]
    &=&\denote{\lambda x.\,b}{M}{\rho}\,w_0\,\bar w.
  \end{array}
\]
\myparagraph{Partial Application Body Case.}
The induction hypotheses are preservation for each supplied argument $a_i$
and Eq.~(\ref{eq:headapp-preservation}) for
$\headapp_h(\bar a^\flat,\bar z)$, at every valuation.
For Eq.~(\ref{eq:bodytrans-application}), let
$\rho'=\rho[\bar z\mapsto\bar w]$. These hypotheses and freshness of
$\bar z$ give
\[
  \begin{array}{rcl}
  \denote{r}{M^\star}{\rho'}
    &=&\denote{h}{M}{\rho'}\,
       \denote{a_1^\flat}{M^\star}{\rho'}\cdots
       \denote{a_m^\flat}{M^\star}{\rho'}\,\bar w\\[2pt]
    &=&\denote{h\,\bar a}{M}{\rho}\,\bar w.
  \end{array}
\]
This establishes Eq.~(\ref{eq:bodytrans-preservation}) and completes the
recursive cases.\qed
\end{proof}

\subsection{Proof of Validity of Generated Axioms}
\label{app:proof-axioms}

\begin{proof}[of Lemma~\ref{lem:generated-axioms}]
We fix $M$ and proceed by induction on the translation derivations: $M^\star\models C\cup E$ whenever
$\defunc$, $\headapp$, or $\bodytrans$ returns auxiliary sets $C,E$.
We use the following consequence of
Definition~\ref{def:denotational-semantics} to combine these sets:
\[
  \inferrule{M^\star\models S_1\quad\cdots\quad M^\star\models S_k}
       {M^\star\models\bigcup_{i=1}^k S_i}.
\]

\myparagraph{Base Case: Variables, Constants, and Literals.}
The base rules for $\defunc$, and the constant and variable head rules for
$\headapp$, generate $C=E=\varnothing$ (\IE $\top$). Thus
$M^\star\models C\cup E$ holds vacuously.

\myparagraph{Closure Case.}
The induction hypothesis for $\bodytrans(e,\bar z)=(r,C,E)$ is
$M^\star\models C\cup E$. It remains to establish the newly generated
closure equation $K_\ell$.
For Eq.~(\ref{eq:defunc-closure}), take any valuation $\rho$ and let
$\bar v=\rho(\bar y)$ and $\bar w=\rho(\bar z)$. Then
\[
  \begin{array}{rcl}
  \denote{\appsym{\theta}(\clossym{\ell}(\bar y),\bar z)}{M^\star}{\rho}
    &=&\denote{\ell}{M}{\rho_{\ell,\bar v}}\,\bar w\\[2pt]
    &=&\denote{e}{M}{\rho}\,\bar w\\[2pt]
    &=&\denote{r}{M^\star}{\rho}.
  \end{array}
\]
The first equality unfolds Definition~\ref{def:canonical-extension}; the
second uses Eq.~(\ref{eq:denote-agreement}) and $\eta$-expansion; the third
uses Eq.~(\ref{eq:bodytrans-preservation}). By equality and universal
quantification in Fig.~\ref{fig:formal-languages},
\[
  \inferrule{\forall\rho.\;
    \denote{\appsym{\theta}(\clossym{\ell}(\bar y),\bar z)}{M^\star}{\rho}
      =\denote{r}{M^\star}{\rho}}
       {M^\star\models\{K_\ell\}}.
\]
Combining this with the induction hypothesis gives
$M^\star\models C\cup\{K_\ell\}\cup E$, as required by the closure rule.

\myparagraph{Function Equality Case.}
The induction hypotheses are $M^\star\models C_u\cup E_u$ and
$M^\star\models C_v\cup E_v$. It remains to establish the newly generated
axiom $\extaxiom{\theta}$.
Let $F,G\in\typedenote{\fnsort{\theta}}{M^\star}$ satisfy the premise
of $\extaxiom{\theta}$. Definition~\ref{def:canonical-extension} interprets
these values as functions in $\typedenote{\theta}{M}$ and
$\appsym{\theta}$ as their application. Then,
\[
  \inferrule{\forall w_1,\ldots,w_n.\;
          F\,w_1\cdots w_n=G\,w_1\cdots w_n}
       {F=G}\quad\mbox{(function extensionality)}.
\]
Since $F$ and $G$ are arbitrary, the implication and quantifier semantics
give $M^\star\models\{\extaxiom{\theta}\}$. Together with the induction
hypotheses, this yields
\[
  M^\star\models C_u\cup C_v\cup E_u\cup E_v\cup\{\extaxiom{\theta}\}.
\]

\myparagraph{Application and Partial Application Body Cases.}
The induction hypotheses are $M^\star\models C^h\cup E^h$ for the head
and $M^\star\models C_{a_i}\cup E_{a_i}$ for every supplied argument.
Taking their union gives
\[
  M^\star\models
  \bigl(C^h\cup\textstyle\bigcup_i C_{a_i}\bigr)
  \cup\bigl(E^h\cup\textstyle\bigcup_i E_{a_i}\bigr),
\]
which is the output of Eqs.~(\ref{eq:defunc-application}) and
(\ref{eq:bodytrans-application}).

\myparagraph{Non-function Equality and Binary Connective Cases.}
The induction hypotheses give validity of the auxiliary sets for each
operand. These rules return their union, whose validity follows from the
union rule above.

\myparagraph{Negation and Quantifier Cases.}
The induction hypothesis is $M^\star\models C_p\cup E_p$ for the body
$p$. Both rules retain these same sets. The generated axioms are already
closed, so their validity is independent of the quantifier's bound value.

\myparagraph{Lambda and Non-function Body Cases.}
The induction hypothesis is validity of the sets returned by
$\defunc(\ell)$ or $\defunc(e)$, respectively. The corresponding rules
for $\headapp$ and $\bodytrans$ return those sets unchanged.

\myparagraph{Lambda Body Case.}
The induction hypothesis gives $M^\star\models C\cup E$ for the sets
returned by $\bodytrans(b[z/x],\bar z)$. The lambda rule in
Eq.~(\ref{eq:bodytrans-ground-lambda}) retains these sets; renaming
preserves validity by Eq.~(\ref{eq:denote-renaming}). This completes all
cases.\qed
\end{proof}

\subsection{Proof of Soundness}
\label{app:proof-soundness}

\begin{proof}[of Theorem~\ref{thm:defunctionalization-soundness}]
We suppose $\theoryencoding(T)$ is unsatisfiable and, for contradiction, assume a
higher-order model $M$ satisfying $T$. We construct $M^\star$ as in
Definition~\ref{def:canonical-extension}. By satisfaction and
Lemma~\ref{lem:denotation-preservation}, for each $\varphi\in T$ we have:
\[
  \inferrule{
    M\models T\qquad\varphi\in T\qquad
    \denote{\varphi^\flat}{M^\star}{\varnothing}
      =\denote{\varphi}{M}{\varnothing}}
       {M^\star\models\{\varphi^\flat\}}.
\]
Lemma~\ref{lem:generated-axioms} supplies the auxiliary formulas.
The same $M^\star$ interprets the symbols of every translated formula, so
Definition~\ref{def:complete-encoding} gives
\[
  \inferrule{
    \forall\varphi\in T.\;M^\star\models\{\varphi^\flat\}
    \qquad
    \forall\varphi\in T.\;M^\star\models C_\varphi\cup E_\varphi}
       {M^\star\models\theoryencoding(T)}.
\]
This contradicts the assumed unsatisfiability:
\[
  \inferrule{M^\star\models\theoryencoding(T)\qquad
        \neg\exists N.\;N\models\theoryencoding(T)}
       {\bot}.
\]
Hence no $M$ satisfies $T$, and $T$ is unsatisfiable.\qed
\end{proof}

\subsection{Scope of Formalization}
\label{app:formalization-boundaries}

The language is monomorphic and nondependent. It contains no datatype
constructors, pattern matching, or built-in arithmetic and bitvector operations.
Such types and operations can be treated abstractly as base types and constants, but their concrete SMT encodings are not justified by this proof. Each base type has its own carrier, so the formalization also excludes embeddings into a larger sort and the guarded closure, application-result, and extensionality axioms used for those embeddings. Solver-native higher-order translation is not modeled.

There are two gaps remaining in the formalization.
First, no mechanized result connects a Lean expression to a formula of this
higher-order language.
Second, first-order models interpret the entire
collection of structurally identified symbols, whereas an SMT script declares
a finite set of named symbols. Connecting these representations requires
showing that all symbols occurring in the generated formulas are declared and
that distinct structural symbols receive distinct names.
We aim to complete these formalizations in future work.

\section{Reconstruction Procedures}
\label{app:reconstruction-procedures}

\begin{figure}[t!]
\centering
\begin{minipage}{\linewidth}
\newcommand{\rcstep}[2]{\textcolor{gray}{\tiny #1}& #2\\}
\newcommand{\rcindent}{\hspace{0.7em}}
\hrule
\vspace{4pt}
\begin{minipage}[t]{0.49\linewidth}
\figurepart{Core-guided reconstruction}
\footnotesize
\renewcommand{\arraystretch}{1.2}
\begin{tabular}{@{}r@{\hspace{4pt}}l@{}}
\rcstep{1}{\textbf{procedure} $\mathsf{Core}(G,F,U)$}
\rcstep{2}{\rcindent $\Gamma\gets\{F(i):A_i\mid i\in U,$}
\rcstep{3}{\rcindent\rcindent\rcindent $F(i)\mbox{ is defined}\}$}
\rcstep{4}{\rcindent $p\gets\mathsf{Search}(\Gamma,G,b)$}
\rcstep{5}{\rcindent \textbf{require} $\Gamma\vdash_{\mathrm{kernel}}p:G$}
\rcstep{6}{\rcindent \textbf{return} $p$}
\multicolumn{2}{l}{}\\[-4pt]
\rcstep{7}{\textbf{procedure} $\mathsf{Search}(\Gamma,A,k)$}
\rcstep{8}{\rcindent \textbf{for} $t\in\mathcal T$ \textbf{do}}
\rcstep{9}{\rcindent\rcindent \textbf{try return} $t(\Gamma\vdash A)$}
\rcstep{10}{\rcindent \textbf{if} $k=0$ \textbf{then fail}}
\rcstep{11}{\rcindent \textbf{for} $r\in\mathcal R$ \textbf{of the form}}
\rcstep{12}{\rcindent\rcindent $r:\forall\bar x.\,A_1\to\cdots\to A_n\to B$}
\rcstep{13}{\rcindent \textbf{and} $\sigma$ \textbf{with} $\sigma(B)\equiv A$ \textbf{do}}
\rcstep{14}{\rcindent\rcindent \textbf{try}}
\rcstep{15}{\rcindent\rcindent\rcindent \textbf{for} $i=1,\ldots,n$ \textbf{do}}
\rcstep{16}{\rcindent\rcindent\rcindent\rcindent $p_i\gets\mathsf{Search}(\Gamma,\sigma(A_i),k-1)$}
\rcstep{17}{\rcindent\rcindent\rcindent \textbf{return} $r\,\sigma(\bar x)\,p_1\cdots p_n$}
\rcstep{18}{\rcindent \textbf{fail}}
\end{tabular}
\end{minipage}\hfill
\begin{minipage}[t]{0.49\linewidth}
\figurepart{Proof replay}
\footnotesize
\renewcommand{\arraystretch}{1.2}
\begin{tabular}{@{}r@{\hspace{4pt}}l@{}}
\rcstep{1}{\textbf{procedure} $\mathsf{Replay}(G,F,\Pi)$}
\rcstep{2}{\rcindent \textbf{assume} $h:\neg G$}
\rcstep{3}{\rcindent $F'\gets F[i_G\mapsto h]$; $P\gets\varnothing$}
\rcstep{4}{\rcindent \textbf{for input} $(i,C)$ \textbf{in} $\Pi$ \textbf{do}}
\rcstep{5}{\rcindent\rcindent $\Gamma_i\gets\{F'(i)\}$ \textbf{if defined,}}
\rcstep{6}{\rcindent\rcindent\rcindent $\varnothing$ \textbf{otherwise}}
\rcstep{7}{\rcindent\rcindent \textbf{try} $p\gets t(\Gamma_i\vdash\delta(C))$,}
\rcstep{8}{\rcindent\rcindent\rcindent $t\in\mathcal T$, \textbf{until success}}
\rcstep{9}{\rcindent\rcindent \textbf{require} $p:\delta(C)$; $P[i]\gets p$}
\rcstep{10}{\rcindent \textbf{for step} $(i,r,C,\bar j,\bar a)$ \textbf{in} $\Pi$ \textbf{do}}
\rcstep{11}{\rcindent\rcindent $A\gets\delta(C)$; $\bar p\gets\langle P[j]\mid j\in\bar j\rangle$}
\rcstep{12}{\rcindent\rcindent \textbf{for handler} $\eta\in\mathcal H_r$ \textbf{do}}
\rcstep{13}{\rcindent\rcindent\rcindent \textbf{try}}
\rcstep{14}{\rcindent\rcindent\rcindent\rcindent $\sigma\gets\mathsf{match}(\mathsf{pattern}_\eta,\bar a)$}
\rcstep{15}{\rcindent\rcindent\rcindent\rcindent $p\gets\eta(\sigma,\bar p;\ A)$}
\rcstep{16}{\rcindent\rcindent\rcindent\rcindent \textbf{require} $p:A$}
\rcstep{17}{\rcindent\rcindent\rcindent\rcindent $P[i]\gets p$; \textbf{break}}
\rcstep{18}{\rcindent\rcindent \textbf{if} $P[i]$ \textbf{undefined then fail}}
\rcstep{19}{\rcindent \textbf{require} $P[\square]:\mathsf{False}$}
\rcstep{20}{\rcindent $q\gets\mathsf{byContradiction}$}
\rcstep{21}{\rcindent\rcindent $(\mathsf{fun}\;h\Rightarrow P[\square])$}
\rcstep{22}{\rcindent \textbf{require} $q:G$ \textbf{by kernel checking}}
\rcstep{23}{\rcindent \textbf{return} $q$}
\end{tabular}
\end{minipage}
\vspace{4pt}
\hrule
\end{minipage}
\caption{Core-guided reconstruction (left) and proof replay (right).}
\label{fig:reconstruction-procedures}
\end{figure}

Fig.~\ref{fig:reconstruction-procedures} sketches reconstruction of a Lean goal
$G$. $F(i):A_i$ is the retained Lean proof for assertion $i$, $U$ is the UNSAT core, and $\Pi$ is the solver proof. $\mathcal R$ contains registered
reconstruction lemmas, $\mathcal T$ contains finishing tactics, and $b$
bounds backward search. A substitution $\sigma$ instantiates lemma parameters
or binds handler captures.
In the proof search procedure, \sys first tries the supplied finisher tactics (such as \texttt{omega}), given the premises in the UNSAT core.
If none of them can close the goal, \sys then tries to match the current goal
against the registered reconstruction lemma.
If a lemma's conclusion matches the goal, it tries to \texttt{apply} the lemma
and recursively search the proof for the premises of the lemma. 

In replay of solver inference steps, $\delta$ decodes SMT clauses using term handlers;
$\mathcal H_r$ includes handlers for rule $r$.
A step records its identifier ($i$), rule ($r$), clause ($C$),
premise identifiers ($\overline{j}$), and arguments ($\overline{a}$). $i_G$ is the negated goal.
We use $P[\square]$ to denote the proof of the final empty clause.
For each such step, \sys uses $\delta(C)$ to obtain the corresponding Lean obligation $A$.
Then, it tries to match the pattern against the arguments.
For a matching handler $\eta$, \sys uses $\eta$ to construct a proof term $p$ and, together with the previously constructed proof terms $\overline{p}$, checks if the reconstructed proof term has type $A$.
If any handler successfully closed the Lean obligation, \sys records the constructed proof term and moves to the next step. 
If all handlers failed, \sys will report the failure.
\section{Trusted SMT Comparison}
\label{app:trusted-comparison}

\begin{table}[t!]
\centering
\caption{Head-to-head comparison with \sys, trusting SMT verdicts.
Times are mean $\pm$ standard deviation across VCs discharged
by both tools, after averaging repetitions per VC.}
\label{tab:eval-head-to-head-trusted}
\footnotesize
\setlength{\tabcolsep}{3pt}
\renewcommand{\arraystretch}{1.12}
\begin{tabular}{@{}llrrrrr@{}}
\hline
& & \multicolumn{3}{c}{\textbf{\# VCs Discharged}}
& \multicolumn{2}{c}{\textbf{Time on Both (ms)}} \\
\cline{3-5}\cline{6-7}
\noalign{\vskip 3pt}
\textbf{Benchmark} & \textbf{Baseline}
& \shortstack{\sys\\\textbf{only}} & \shortstack{\textbf{Baseline}\\\textbf{only}}
& \textbf{Both} & \sys & \textbf{Baseline} \\
\hline
\multirow{4}{*}{\shortstack[l]{Curated\\(20 VCs)}}
& \leanauto & 2 & 0 & 17 & $234.8_{\pm 108.3}$ & $322.2_{\pm 20.1}$ \\
& Duper & 8 & 1 & 11 & $234.1_{\pm 114.0}$ & $98.3_{\pm 24.1}$ \\
& \leansmt & 7 & 0 & 12 & $239.4_{\pm 106.9}$ & $106.8_{\pm 31.5}$ \\
& \grind & 3 & 0 & 16 & $234.6_{\pm 112.8}$ & $5.5_{\pm 4.8}$ \\
\hline
\multirow{4}{*}{\shortstack[l]{Cashmere\\(38 VCs)}}
& \leanauto & 20 & 0 & 18 & $227.2_{\pm 381.8}$ & $294.0_{\pm 36.0}$ \\
& Duper & 17 & 0 & 21 & $247.0_{\pm 400.5}$ & $109.3_{\pm 102.6}$ \\
& \leansmt & 21 & 0 & 17 & $151.4_{\pm 341.1}$ & $135.6_{\pm 44.2}$ \\
& \grind & 19 & 0 & 19 & $215.4_{\pm 374.6}$ & $9.8_{\pm 2.7}$ \\
\hline
\multirow{4}{*}{\shortstack[l]{Velvet\\(504 VCs)}}
& \leanauto & 73 & 10 & 405 & $398.3_{\pm 490.2}$ & $378.6_{\pm 225.7}$ \\
& Duper & 190 & 3 & 288 & $232.0_{\pm 345.8}$ & $220.3_{\pm 517.4}$ \\
& \leansmt & 185 & 9 & 293 & $377.8_{\pm 480.7}$ & $145.5_{\pm 112.5}$ \\
& \grind & 43 & 9 & 435 & $349.5_{\pm 371.7}$ & $26.1_{\pm 26.8}$ \\
\hline
\multirow{4}{*}{\shortstack[l]{PLean\\(192 VCs)}}
& \leanauto & 2 & 0 & 172 & $3583.1_{\pm 11750.4}$ & $3621.0_{\pm 5972.5}$ \\
& Duper & 85 & 0 & 89 & $609.6_{\pm 268.2}$ & $2755.4_{\pm 8169.2}$ \\
& \leansmt & 74 & 0 & 100 & $1521.8_{\pm 8680.9}$ & $1298.7_{\pm 3712.7}$ \\
& \grind & 12 & 0 & 162 & $3528.6_{\pm 12102.2}$ & $1206.5_{\pm 2327.6}$ \\
\hline
\end{tabular}
\end{table}

Table~\ref{tab:eval-head-to-head-trusted} reports the head-to-head comparison
when \sys trusts SMT results. It uses the same VCs and baselines as
Table~\ref{tab:eval-head-to-head}. Runtime means and standard deviations are
computed separately for each pair over the VCs both tools discharge.


\end{document}